\begin{document}
\newtheorem{Proposition}{Proposition}[section]

\title{Unraveling the emergence of quantum state designs in systems with symmetry}

\author{Naga Dileep Varikuti}
\email{vndileep@physics.iitm.ac.in}
\affiliation{Department of Physics, Indian Institute of Technology Madras, Chennai, India, 600036}
\affiliation{Center for Quantum Information, Communication and Computing (CQuICC),
Indian Institute of Technology Madras, Chennai, India 600036}
\author{Soumik Bandyopadhyay}
\email{soumik.bandyopadhyay@unitn.it}
\affiliation{Pitaevskii BEC Center, CNR-INO and Dipartimento di Fisica, Universit\`a di Trento, Via Sommarive 14, Trento, I-38123, Italy}	
\affiliation{INFN-TIFPA, Trento Institute for Fundamental Physics and Applications, Via Sommarive 14, Trento, I-38123, Italy
}

\begin{abstract}
Quantum state designs, by enabling an efficient sampling of random quantum states, play a quintessential role in devising and benchmarking various quantum protocols with broad applications ranging from circuit designs to black hole physics. Symmetries, on the other hand, are expected to reduce the randomness of a state. Despite being ubiquitous, the effects of symmetry on quantum state designs remain an outstanding question. The recently introduced projected ensemble framework generates efficient approximate state $t$-designs by hinging on projective measurements and many-body quantum chaos. In this work, we examine the emergence of state designs from the random generator states exhibiting symmetries. Leveraging on translation symmetry, we analytically establish a sufficient condition for the measurement basis leading to the state $t$-designs. Then, by making use of the trace distance measure, we numerically investigate the convergence to the designs. Subsequently, we inspect the violation of the sufficient condition to identify bases that fail to converge. We further demonstrate the emergence of state designs in a physical system by studying the dynamics of a chaotic tilted field Ising chain with translation symmetry. We find faster convergence of the trace distance during the early time evolution {in comparison to the cases when the symmetry is broken.} To delineate the general applicability of our results, we extend our analysis to other symmetries. We expect our findings to pave the way for further exploration of deep thermalization and equilibration of closed and open quantum many-body systems.

\end{abstract}

\maketitle

\newtheorem{theorem}{Result}[section]
\newtheorem{corollary}{Corollary}[theorem]
\newtheorem{lemma}[theorem]{Lemma}
\def\endproof{\hfill$\blacksquare$}

\section{Introduction}
Preparing random quantum states and operators is an essential ingredient to explore a variety of quantum protocols, such as randomized benchmarking \cite{benchmarking1, knill2008randomized, benchmarking2}, randomized measurements \cite{vermersch2019probing, elben2023randomized}, circuit designs \cite{harrow2009random, brown2010convergence}, quantum state tomography \cite{smith2013quantum, merkel2010random}, etc., and has vast applications ranging from quantum gravity \cite{sekino2008fast}, information scrambling \cite{styliaris2021information, pawan}, quantum chaos \cite{haake1991quantum}, information recovery \cite{hayden2007black, yoshida2017efficient}, machine-learning \cite{huang2020predicting, huang2022quantum, holmes2021barren} to quantum algorithms \cite{tilly2022variational}. Quantum state $t$-designs were introduced to answer the pertinent question: \textit{How can one efficiently sample a Haar random state from the given Hilbert space?}
To this end, state designs correspond to finite ensembles of pure states uniformly distributed over the Hilbert space, replicating the behavior of Haar random states to a certain degree \cite{renes2004symmetric, klappenecker2005mutually, benchmarking2}. 
However, generating such states in experiments is a challenging task since it requires precise control over the targeted degrees of freedom with fine-tuned resolution \cite{morvan2021qutrit, proctor2022scalable, boixo2018characterizing}.

Motivated by the recent advances in quantum technologies \cite{gross2017quantum, blatt2012quantum, browaeys2016experimental, gambetta2017building}, the `projected ensemble' framework has been introduced as a natural avenue for the emergence of state designs from quantum chaotic dynamics \cite{cotler2023emergent, choi2023preparing}. Under this framework, one employs projective measurements on the larger subsystem (bath) of a single bi-partite state undergoing quantum chaotic evolution, which generates a set of pure states on the smaller subsystem. These states, together with the Born probabilities, referred to as the projected ensemble, remarkably converge to a state design when the measured part of the system is sufficiently large. This phenomenon, dubbed as \textit{emergent state designs}, has been closely tied to a stringent generalization of regular quantum thermalization. Under the usual framework, predominately characterized through the Eigenstate Thermalization Hypothesis (ETH) \cite{deutsch1991quantum, srednicki1994chaos, ETH_ansatz_expt, d2016quantum, deutsch_18, eth_nonherm}, the bath degrees of freedom are traced out, and the thermalization is retrieved at the level of local observables. Whereas the projected ensemble retains the memory of the bath through the measurements such that thermalization is explored for the sub-system wavefunctions. This generalization has been referred to as \textit{deep thermalization} \cite{ho2022exact, ippoliti2022solvable, ippoliti2023dynamical}.
The emergence of higher-order state designs has been explicitly studied in recent years under various physical settings \cite{cotler2023emergent, ho2022exact, ippoliti2022solvable, lucas2023generalized, shrotriya2023nonlocality, versini2023efficient}, including dual unitary circuits \cite{claeys2022emergent, ippoliti2023dynamical} and constrained physical models \cite{bhore2023deep} with applications to classical shadow tomography \cite{mcginley2023shadow} and benchmarking quantum devices \cite{, choi2023preparing}. 
In the case of chaotic systems without symmetries, arbitrary measurement bases can be considered to witness the emergence of state designs. The presence of symmetries is expected to influence this property.

Symmetries in quantum systems are associated with discrete or continuous group structures. Their presence causes the decomposition of the system into charge-conserving subspaces. This results in constraining the dynamical \cite{ope4, ope5, friedman2019spectral} and equilibrium properties \cite{yunger2016microcanonical} of many-body systems \cite{nakata2023black, bhattacharya2017syk, balachandran2021eigenstate, kudler2022information, chen2020many, paviglianiti2023absence, agarwal2023charge, varikuti2022out}. When a generic system displays symmetry, ETH is known to be satisfied within each invariant subspace \cite{deutsch1991quantum, srednicki1994chaos, d2016quantum}. Deep thermalization, on the other hand, depends non-trivially on the specific measurement basis \cite{cotler2023emergent, bhore2023deep}. Motivated by this, here we ask the intriguing question: \textit{What's the general choice of measurement basis for the emergence of $t$-designs when the generator state abides by a symmetry?} In order to address this question, we first adhere our analysis to generator states with translation symmetry. In particular, we consider the ensembles of the random translation invariant (or shortly T-invariant) states and investigate the emergent state designs within the projected ensemble framework. We then elucidate the generality of our findings by extending its applicability to other discrete symmetries.

This paper is structured as follows. In Sec. \ref{Frame}, we briefly review the projected ensemble framework, outline the central question we are trying to address, and summarize our key results. In Sec. \ref{T-invariant}, we consider the ensembles of translation symmetric states and provide an analytical expression for their moments. In Sec. \ref{first-design} and \ref{second-design}, we study the emergence of first and higher-order state designs, respectively, and outline a sufficient condition on the measurement basis for achieving these designs. This is followed by an analysis of the violation of the condition shown by various measurement bases in Sec. \ref{violation}. We then consider a chaotic Ising chain with periodic boundary conditions in Sec. \ref{sising} and examine deep thermalization in a state evolved under this Hamiltonian. In Sec. \ref{genn}, we generalize the results to other discrete symmetries such as $Z_2$ and reflection symmetries. 
Finally, we conclude this paper in Sec. \ref{discussion}.

\section{Framework and results}\label{Frame}
Here, we briefly outline the projected ensemble framework \cite{ho2022exact, cotler2023emergent} and summarize our main results.
A $t$-th order quantum state design ($t$-design) is an ensemble of pure quantum states that reproduces the average behavior of any quantum state polynomial of order $t$ or less over all possible pure states, represented by the Haar average. An ensemble $\mathcal{E}\equiv\{p_i, |\psi_{i}\rangle\}$ is an exact $t$-design if and only if its moments match those of the Haar ensemble up to order $t$, i.e., 
\begin{eqnarray}
\sum_{i=1}^{|\mathcal{E}|}p_i\left(|\psi\rangle\langle\psi |\right)^{\otimes t}=\int_{|\psi\rangle}d\psi \left(|\psi\rangle\langle\psi|\right)^{\otimes t} 
\end{eqnarray}
The projected ensemble framework aims to generate quantum state designs from a single chaotic or random many-body quantum state. The protocol involves performing local projective measurements on part of the system. First, consider a generator quantum state $|\psi\rangle\in\mathcal{H}^{\otimes N}$, where $\mathcal{H}$ denotes the local Hilbert space of dimension $d$ and $N=N_A+N_B$ denotes the size of the system constituting subsystems-$A$ and $B$. Then, projectively measuring the subsystem-$B$ gives a statistical mixture of pure states (or projected ensemble) corresponding to the subsystem-$A$. To be more precise, projective measurement of $|\psi\rangle$ in a basis $\mathcal{B}\equiv\{|b\rangle\}$ supported over the subsystem-$B$ yielding the state 
\begin{eqnarray}\label{unnor}
|\Tilde{\psi}(b)\rangle =\left(\mathbb{I}_{2^{N_A}}\otimes|b\rangle\langle b|\right)|\psi\rangle \quad \text{ (unnormalized) },  
\end{eqnarray}
with the probability $p_b=\langle\Tilde{\psi}(b)|\Tilde{\psi}(b)\rangle=\langle\psi |b\rangle\langle b|\psi\rangle$. Since the projective measurements disentangle the subsystems, we can safely disregard the subsystem-$B$ and focus on the quantum state of subsystem-$A$. After normalizing the post-measurement state, we obtain 
\begin{eqnarray}
|\phi(b)\rangle=\dfrac{|\Tilde{\psi}(b)\rangle}{\sqrt{p_b}}=\dfrac{\left(\mathbb{I}_{2^{N_A}}\otimes\langle b|\right)|\psi\rangle}{\sqrt{\langle\psi |b\rangle\langle b|\psi\rangle}}.    
\end{eqnarray}
Then, the projected ensemble on $A$ given by $\mathcal{E}(|\psi\rangle, \mathcal{B})\equiv\left\{p_b, |\phi(b)\rangle\right\}$ approximate higher-order quantum state designs if $|\psi\rangle$ is generated by a chaotic evolution \cite{ho2022exact, cotler2023emergent}, i.e., for $t\geq 1$, 
\begin{eqnarray}\label{Deltat}
\Delta^{(t)}_{\mathcal{E}}&\equiv &\left\|\sum_{|b\rangle\in\mathcal{B}}\dfrac{\left(\langle b|\psi\rangle\langle\psi |b\rangle\right)^{\otimes t}}{\left(\langle\psi |b\rangle\langle b|\psi\rangle\right)^{t-1}}-\int\limits_{|\phi\rangle\in\mathcal{E}^{A}_{\text{Haar}}}d\phi \left(|\phi\rangle\langle\phi|\right)^{\otimes t}\right\|_1\nonumber\\
&\leq &\varepsilon.
\end{eqnarray}
In the above equation, the trace norm (or Schatten $1$-norm) of an operator $W$, denoted as $\|W\|_{1}$, is defined as $\|W\|_{1}=\Tr(\sqrt{W^{\dagger}W})$, which is equivalent to the sum of singular values of the operator. The two terms inside the trace norm are the $t$-th moments of the projected ensemble and the ensemble of Haar random states supported over $A$, respectively. It is important to note that \cite{renes2004symmetric} 
\begin{eqnarray}
\int_{|\phi\rangle\in\mathcal{E}^{A}_{\text{Haar}}}d\phi \left(|\phi\rangle\langle\phi|\right)^{\otimes t}= \dfrac{\bm{\Pi}^{A}_{t}}{\mathcal{D}_{A}},    
\end{eqnarray}
where the Haar measure over $\mathcal{E}^{A}_{\text{Haar}}$ is denoted by $d\phi$, $\mathcal{D}_A=d^{N_A}(d^{N_A}+1)\cdots (d^{N_A}+t-1)$, and 
$\bm{\Pi}^{A}_{t}$ is the projector onto the permutation symmetric subspace of the Hilbert space $\mathcal{H}^{\otimes N_{A}}\otimes\mathcal{H}^{\otimes N_{A}}\otimes\cdots t$-times, i.e., $\bm{\Pi}^{A}_{t}=\sum_{\pi_i\in S_t}\pi_i$. The permutation group is represented by $S_t$. The permutation operators $\pi_i$'s act on $t$-replicas of the same Hilbert space ($\mathcal{H}^{\otimes N_{A}}$) as follows:
\begin{eqnarray}
\pi_i\left(|\psi_1\rangle\otimes ...\otimes |\psi_t\rangle\right)=|\psi_{{\pi}^{-1}(1)}\rangle\otimes ...\otimes |\psi_{{\pi}^{-1}(t)}\rangle.
\end{eqnarray}
The trace distance $\Delta^{(t)}_{\mathcal{E}}$ in Eq. (\ref{Deltat}) vanishes if and only if the ensemble $\mathcal{E}(|\psi\rangle, \mathcal{B})$ forms an exact $t$-design \cite{renes2004symmetric, klappenecker2005mutually, benchmarking2}. If the generator state $|\psi\rangle$ is Haar random, the trace distance $\Delta^{(t)}_{\mathcal{E}}$ exponentially converges to zero with $N_B$ for any $t\geq 1$, as demonstrated in Ref. \cite{cotler2023emergent}. In this case, the measurement basis can be arbitrary, and the behavior is generic to the choice of basis. On the other hand, for the generator state abiding by a symmetry, the choice of measurement basis becomes crucial. In this work, we address this particular aspect.  

Our general result can be summarized as follows: Given a symmetry operator $Q$ and a measurement basis $\mathcal{B}$, the projected ensembles of generic $Q$-symmetric quantum states approximate state $t$-designs if for all $|b\rangle\in\mathcal{B}$, $\langle b|\mathbf{Q}_{k}|b\rangle=\mathbb{I}_{2^{N_A}}$, where $\mathbf{Q}_{k}$ represents the projector onto an invariant subspace with charge $k$. Here, it is to be noted that $\mathbf{Q}_{k}$ can be constructed by taking an appropriate linear combination of the elements of the corresponding symmetry group. We employ $\Delta^{t}_{\mathcal{E}}$ as a figure of merit for the state designs. We explicitly derive this condition for the ensembles of translation invariant states and provide arguments to show its generality to other symmetries. We further elucidate the emergence of state designs from translation symmetric states by explicitly considering a physical model. 

\begin{figure}[ht!]
\includegraphics[width=\linewidth, height=7.cm]{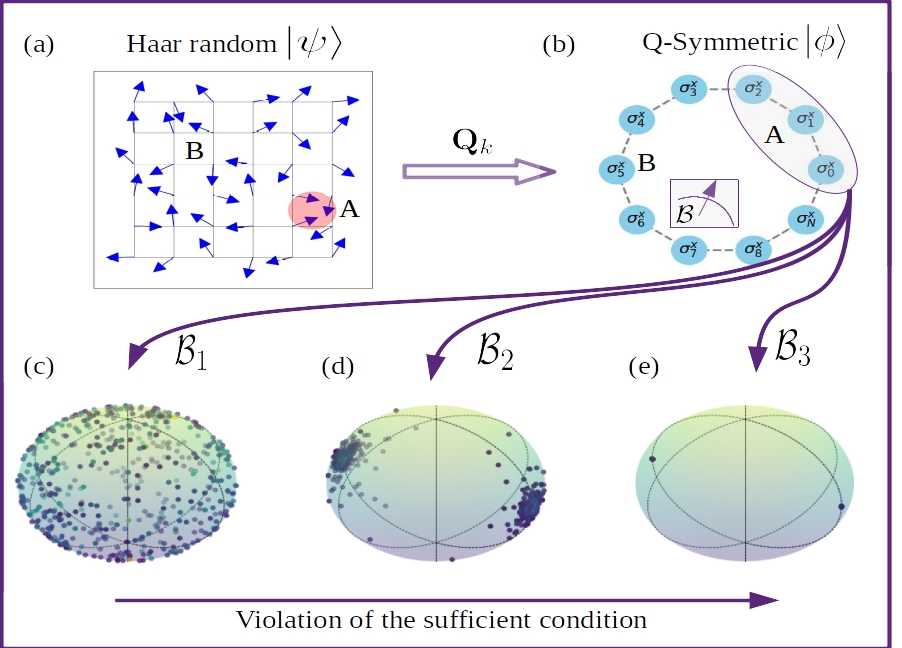}
\caption{\label{fig:sch} Schematic representation of the projected ensemble framework for a $Q$-symmetric state (an eigenstate of a symmetry operator $Q$), showcasing the interplay between measurement bases and symmetry. (a) Haar random state, wherein the spins are randomly aligned and entangled. An application of the subspace projector $\mathbf{Q}_k$ takes the Haar random state to an invariant subspace with charge $k$, which is depicted in (b). For simplicity, we are showing a $Z_2$-symmetric state (see Sec. \ref{z2s}) in this schematic. The labels $\sigma^x_j$ indicate that the state is invariant under spin-flip operations. The resultant state is subjected to projective measurements over the subsystem-$B$. These measurements result in the so-called projected ensembles supported over $A$ (see Sec. \ref{Frame}). (c)-(e) Distributions of the resultant projected ensembles for three different measurement bases. While in (c), the measurements in $\mathcal{B}_1$ yield a uniformly distributed projected ensemble, $\mathcal{B}_2$ and $\mathcal{B}_3$ result in ensembles of pure states localized near $|\pm\rangle$ states. We understand that the latter cases largely violate a sufficient condition we derive in this text for the emergence of state designs. } \end{figure}

\section{T-invariant quantum states}\label{T-invariant}
In this section, we construct the ensembles of translation symmetric (T-invariant) quantum states from the Haar random states and calculate the moments associated with those ensembles. The T-invariant states are well studied in condensed matter physics and quantum field theory for the ground state properties, such as entanglement and phase transitions. Due to the non-onsite nature of the symmetry, the generic T-invariant states (excluding the product states of the form $|\psi\rangle^{\otimes N}$) are necessarily long-range entangled \footnote{An operational definition of the long-range entanglement is as follows: A state $|\phi\rangle\in \mathcal{H}^{\otimes N}$ is long-range entangled if it can not be realized by the application of a finite-depth local circuit on a trivial product state $|0\rangle^{\otimes N}$.}\cite{gioia2022nonzero}. Thus, in a generic T-invariant state, the information is uniformly spread across all the sites, somewhat mimicking the Haar random states. This makes them ideal candidates for generating state designs besides Haar random states. Moreover, in the context of ETH, generic T-invariant systems have been shown to thermalize local observables \cite{santos2010localization, mori2018thermalization, sugimoto2023eigenstate}. Hence, studying deep thermalization in these systems is of profound interest.

Let $T=e^{iP}$ denote the lattice translation operator on a system with a total of $N$ sites, each having a local Hilbert space dimension of $d$, where $P$ is the lattice momentum operator. Then, $T$ can be defined by its action on the computational basis vectors as follows:
\begin{equation}
T|i_1, i_2, ..., i_N\rangle =|i_N, i_1, ..., i_{N-1}\rangle, 
\end{equation}
with $N$-th roots of unity as eigenvalues. A system is considered T-invariant if its Hamiltonian $H$ commutes with $T$, i.e., $[H, T]=0$. A T-invariant state $|\psi\rangle$ is an eigenstate of $T$ with an eigenvalue $e^{-2\pi i k/N}$, i.e., $T|\psi\rangle =e^{-2\pi ik/N}|\psi\rangle$, where $k\in\mathbb{Z}_{\geq 0}$ and $0\leq k\leq N-1$, characterizes the lattice momentum charge. We then represent the set of all pure T-invariant states having the momentum charge $k$ with $\mathcal{E}^{k}_{\text{TI}}$. To compute their moments, we first outline their construction from the Haar random states, followed by the Haar average. 

The translation operator, $T$, generates the translation group given by $\{T^j\}_{j=0}^{N-1}$. Then, a Haar random state $|\psi\rangle\in\mathcal{H}^{\otimes N}$ can be projected onto a translation symmetric subspace by taking a uniform superposition of the states $\{e^{2\pi ijk/N}T^j|\psi\rangle\}_{j=0}^{N-1}$:
\begin{eqnarray}\label{Tsym}
|\psi\rangle\rightarrow |\phi\rangle=\mathbb{T}_{k}(|\psi\rangle)=\dfrac{1}{\mathcal{N}}\mathbf{T}_{k}|\psi\rangle, 
\end{eqnarray}
where
\begin{equation}\label{Tstate}
\mathbf{T}_{k}=\sum_{j=0}^{N-1}e^{2\pi ijk/N}T^j\hspace{-0.1cm}\quad\text{and }
\mathcal{N}=\sqrt{\langle\psi|\mathbf{T}^{\dagger}_{k}\mathbf{T}_{k}|\psi\rangle}.
\end{equation}
Here, the mapping from $|\psi\rangle$ to the T-invariant state $|\phi\rangle$ is denoted with $\mathbb{T}_{k}(|\psi\rangle)$. 
The Hermitian operator $\mathbf{T}_{k}$ projects any generic state onto an invariant momentum sector with the charge $k$. Therefore, $\mathbf{T}_{k}\mathbf{T}_{k'}=N\mathbf{T}_{k}\delta_{k, k'}$, yielding the normalizing factor $\mathcal{N}=\sqrt{N\langle\psi|\mathbf{T}_{k}|\psi\rangle}$. The resultant state $|\phi\rangle$ is an eigenstate of $T$ with the eigenvalue $e^{-2\pi ik/ N}$, i.e., $T|\phi\rangle =e^{-2\pi ik/N}|\phi\rangle$. In this way, we can project the set of Haar random states to an $N$-number of disjoint sets of random T-invariant states, each characterized by the momentum charge $k$. Introducing the translation invariance causes partial de-randomization of the Haar random states. This is because a generic quantum state with support over $N$ sites can be described using $n_{\text{Haar}}\approx d^{N}$ independent complex parameters \cite{linden1998multi}. The translation invariance, however, reduces this number by a factor of $N$, i.e., $n_{\text{TI}}\approx d^N/N$. As we shall see, this results in more structure of the moments of the T-invariant states,  $\mathbb{E}_{\phi\in\mathcal{E}^{k}_{\text{TI}}}\left[|\phi\rangle\langle\phi|^{\otimes t}\right]$.

Before evaluating the moments of $\mathcal{E}^{k}_{\text{TI}}$, it is useful to state the following result: 
\begin{theorem}
Let $U_{\text{TI}}(d^N)$ be a subset of the unitary group $U(d^N)$, which contains all the unitaries that are T-invariant, i.e., $[v, T]=0$ for all $v\in U_{\text{TI}}(d^N)$, then $U_{\text{TI}}(d^N)$ is a compact subgroup of $U(d^N)$.  
\end{theorem}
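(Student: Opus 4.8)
The plan is to verify directly that $U_{\text{TI}}(d^N)$ satisfies the three subgroup axioms and then to argue that it is a \emph{closed} subset of the unitary group $U(d^N)$; since $U(d^N)$ is compact and a closed subset of a compact space is compact, this immediately yields the conclusion. The whole argument is structural and topological, using only that $v^{\dagger}v=\mathbb{I}$ for unitaries and that commutation with a fixed operator is a closed condition.

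First I would check the subgroup conditions. The identity commutes with $T$, so $\mathbb{I}\in U_{\text{TI}}(d^N)$ and the set is nonempty. For closure under multiplication, if $v,w\in U_{\text{TI}}(d^N)$ then $vwT=vTw=Tvw$, so $[vw,T]=0$ and $vw\in U_{\text{TI}}(d^N)$. For closure under inversion, starting from $vT=Tv$ and multiplying on both sides by $v^{-1}=v^{\dagger}$ gives $Tv^{-1}=v^{-1}T$, hence $[v^{-1},T]=0$. Each of these is a one-line algebraic manipulation, so $U_{\text{TI}}(d^N)$ is a subgroup of $U(d^N)$.

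Next I would establish compactness. The ambient group $U(d^N)$ is compact: it is the solution set of $v^{\dagger}v=\mathbb{I}$ (closed) and is bounded, since every entry of a unitary has modulus at most one, inside the finite-dimensional space of $d^N\times d^N$ complex matrices. I would then introduce the commutator map $\Phi(v)=vT-Tv$, which is continuous because it is (in fact linear, hence) polynomial in the matrix entries of $v$. Writing $U_{\text{TI}}(d^N)=U(d^N)\cap\Phi^{-1}(\{0\})$ exhibits it as the intersection of a compact set with the preimage of the closed singleton $\{0\}$; it is therefore closed in $U(d^N)$ and consequently compact. As a structural cross-check one can note that, since $T$ is unitary and hence normal, it decomposes $\mathcal{H}^{\otimes N}$ into orthogonal momentum eigenspaces, and its commutant within $U(d^N)$ consists exactly of the block-diagonal unitaries, so $U_{\text{TI}}(d^N)\cong\prod_k U(m_k)$ with $m_k$ the dimension of the charge-$k$ sector; a finite product of compact groups is compact, recovering the same result.

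I do not expect any genuine obstacle here: the statement is essentially the standard fact that the commutant, within a compact matrix group, of a fixed operator is a closed subgroup. The only point requiring a little care is to \emph{invoke} the compactness of $U(d^N)$ rather than re-deriving it, and to phrase the closedness of $U_{\text{TI}}(d^N)$ cleanly through the continuity of the commutator map so that the closed-subset-of-compact argument applies without gaps.
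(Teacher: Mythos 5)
Your proof is correct and follows essentially the same route as the paper: the paper likewise asserts the subgroup property, observes boundedness, and obtains closedness by realizing $U_{\text{TI}}(d^N)$ as the preimage of the zero matrix under the continuous map $v\mapsto v-T^{\dagger}vT$ (equivalent, up to multiplication by the invertible $T$, to your commutator map $\Phi(v)=vT-Tv$), concluding compactness from closed-plus-bounded rather than your equivalent closed-subset-of-a-compact-group phrasing. Your explicit verification of the subgroup axioms and the block-diagonal commutant cross-check $U_{\text{TI}}(d^N)\cong\prod_k U(m_k)$ are nice additions but do not change the substance of the argument.
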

\begin{proof}
Consider the subset $U_{\text{TI}}(d^N)$ of $U(d^N)$ containing all the T-invariant unitaries --- for every $v\in U_{\text{TI}}(d^N)$, we have $T^{\dagger}vT=v$. Clearly, $U_{\text{TI}}(d^N)$ is a subgroup of $U(d^N)$. As the operator norm of any unitary matrix is bounded, $U_{\text{TI}}(d^N)$ is bounded. Moreover, we can define $U_{\text{TI}}(d^N)$ as the preimage of the null matrix $\mathbf{0}$ under the operation $v-T^{\dagger}vT$ for $v\in U(d^N)$, hence it is necessarily closed. This implies that $U_{\text{TI}}(d^N)$ is a compact subgroup of $U(d^N)$. 
\end{proof}

A method for constructing random translation invariant unitaries using the polar decomposition is outlined in Appendix \ref{polar}. 
An immediate consequence of the above result is that there exists a natural Haar measure on the subgroup $U_{\text{TI}}(d^N)$. It is to be noted that projecting Haar random states onto T-invariant subspaces creates uniformly random states within those subspaces. This means that the distribution of states in $\mathcal{E}^{k}_{\text{TI}}$ is invariant under the action of $U_{\text{TI}}(2^N)$.
To see this, sample $|\phi\rangle$ and $|\phi'\rangle$ from $\mathcal{E}^{k}_{\text{TI}}$ such that they are related to each other via the left invariance of the Haar measure over $U(d^N)$, i.e., $|\phi\rangle=\mathbf{T}_{k}u|0\rangle/\mathcal{N}$ and $|\phi'\rangle=\mathbf{T}_{k}vu|0\rangle/\mathcal{N}$, where $u\in U(d^N)$ and $v\in U_{\text{TI}}(d^N)$. Since $v$ and $\mathbf{T}_{k}$ commute, we can write $|\phi'\rangle=v|\phi\rangle$. 
Let $v$ be sampled according to the Haar measure in $U_{\text{TI}}(d^N)$, then, the state $v|\phi\rangle$ must be uniformly random in $\mathcal{E}^{k}_{\text{TI}}$. Now, $|\phi\rangle$ and $|\phi'\rangle$ being sampled through the projection 
of $\mathbf{T}_{k}$, we can conclude that all the states similarly projected to $\mathcal{E}^{k}_{\text{TI}}$ will be uniformly random. 
We use this result to derive the moments of $\mathcal{E}^{k}_{\text{TI}}$. 

\begin{theorem}\label{Tran_designs}
Let $|\psi\rangle$ be a pure quantum state drawn uniformly at random from the Haar ensemble, then for the mapping $|\psi\rangle\rightarrow |\phi\rangle=\mathbb{T}_{k}(|\psi\rangle)$, it holds that 
\begin{eqnarray}\label{t-moments-res}
\mathbb{E}_{|\phi\rangle\in\mathcal{E}^{k}_{\text{TI}}}\left[\left[|\phi\rangle\langle\phi|\right]^{\otimes t}\right]=\dfrac{1}{\alpha_{t}}\mathbf{T}^{\otimes t}_{k}\bm{\Pi}_{t}, 
\end{eqnarray}
where $\alpha_t$ denotes the normalizing constant, which is given by
\begin{eqnarray}\label{Tmoments}
\alpha_t=\Tr\left(\mathbf{T}^{\otimes t}_{k}\bm{\Pi}_{t}\right).    
\end{eqnarray}    
\end{theorem}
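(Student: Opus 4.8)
The plan is to sidestep the state-dependent normalization $\mathcal{N}=\sqrt{N\langle\psi|\mathbf{T}_k|\psi\rangle}$ altogether by first pinning down the \emph{distribution} of the projected state $|\phi\rangle$ and only afterwards invoking a standard Haar-moment identity. Write $P_k=\mathbf{T}_k/N$, which by the relation $\mathbf{T}_k\mathbf{T}_{k'}=N\mathbf{T}_k\delta_{k,k'}$ is the orthogonal projector onto the momentum-$k$ subspace $V_k$, and set $D_k=\Tr(P_k)=\dim V_k$. Every $|\phi\rangle\in\mathcal{E}^{k}_{\text{TI}}$ lives in $V_k$, so the moment operator $M_t\equiv\mathbb{E}_{|\phi\rangle\in\mathcal{E}^{k}_{\text{TI}}}[(|\phi\rangle\langle\phi|)^{\otimes t}]$ is supported on $V_k^{\otimes t}$ and is invariant under permutation of the $t$ copies, hence lives in $\mathrm{Sym}^t(V_k)$.

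The central step is to argue that $|\phi\rangle$ is Haar-distributed on the unit sphere of $V_k$. This combines the $U_{\text{TI}}(d^N)$-invariance of $\mathcal{E}^{k}_{\text{TI}}$ established just above the theorem with the observation that a unitary commutes with $T$ if and only if it is block diagonal in the momentum eigenbasis; consequently the restriction of $U_{\text{TI}}(d^N)$ to the block $V_k$ realizes the \emph{full} unitary group $U(V_k)$. Since the distribution of $|\phi\rangle$ is invariant under this action, which is transitive on the unit sphere of $V_k$, and since the uniform measure is the unique such invariant measure, $|\phi\rangle$ must be Haar-random within $V_k$.

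With this identification, I would apply the textbook symmetric-subspace formula on the $D_k$-dimensional space $V_k$. Let $\Pi^{(k)}_t$ denote the projector onto $\mathrm{Sym}^t(V_k)$. Because the permutation operators commute with $P_k^{\otimes t}$, one has $\Pi^{(k)}_t=\tfrac{1}{t!}\sum_{\pi\in S_t}\pi\,P_k^{\otimes t}=\tfrac{1}{t!\,N^t}\,\mathbf{T}_k^{\otimes t}\bm{\Pi}_t$, which is manifestly proportional to $\mathbf{T}_k^{\otimes t}\bm{\Pi}_t$. The Haar-moment identity then yields
\begin{equation}
M_t=\frac{\Pi^{(k)}_t}{\binom{D_k+t-1}{t}}=\frac{\mathbf{T}_k^{\otimes t}\bm{\Pi}_t}{N^t\,D_k(D_k+1)\cdots(D_k+t-1)}\propto \mathbf{T}_k^{\otimes t}\bm{\Pi}_t .
\end{equation}
The proportionality constant is fixed by normalization: since $\Tr(M_t)=1$, taking the trace of $M_t=c\,\mathbf{T}_k^{\otimes t}\bm{\Pi}_t$ gives $c=1/\Tr(\mathbf{T}_k^{\otimes t}\bm{\Pi}_t)=1/\alpha_t$, reproducing exactly the claimed $\alpha_t$ in Eq.~\eqref{Tmoments}. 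This also shows $\alpha_t=N^t D_k(D_k+1)\cdots(D_k+t-1)$, which may alternatively be verified by evaluating the trace directly.

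The hard part is the second paragraph: legitimizing the passage from the nonlinear, normalization-dependent map $|\psi\rangle\mapsto\mathbf{T}_k|\psi\rangle/\mathcal{N}$ to a clean Haar distribution on $V_k$. A naive attempt to push the Haar average over $|\psi\rangle$ through the $t$-th moment fails, precisely because $\mathcal{N}$ couples nontrivially to the numerator $\mathbf{T}_k^{\otimes t}(|\psi\rangle\langle\psi|)^{\otimes t}\mathbf{T}_k^{\otimes t}$. The invariance argument circumvents this entirely, but it hinges on carefully combining two ingredients: the $U_{\text{TI}}$-invariance of the ensemble and the surjectivity of the restriction map $U_{\text{TI}}(d^N)\to U(V_k)$. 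Everything downstream of that identification is the routine symmetric-subspace computation sketched above.
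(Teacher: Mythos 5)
Your proof is correct, but it takes a genuinely different route from the paper's. The paper's proof (Appendix~B) confronts the state-dependent normalization head-on: it multiplies the moment by $\langle\psi|\mathbf{T}^{\dagger}_{k}\mathbf{T}_{k}|\psi\rangle^{t}$ and exploits the left-invariance of the Haar measure on $U(d^N)$ under multiplication by $v\in U_{\text{TI}}(d^N)$ to show that $\left(|\phi\rangle\langle\phi|\right)^{\otimes t}$ and the normalization are statistically independent random variables; the expectation then factorizes into a ratio of two unconstrained Haar integrals over the full $U(d^N)$, each handled by Schur--Weyl duality (the paper also records a second, commutant-based proof). You instead pin down the distribution itself: since a unitary commutes with $T$ exactly when it is block-diagonal with respect to the momentum eigenspaces, the restriction of $U_{\text{TI}}(d^N)$ to $V_k$ is the full group $U(V_k)$, which acts transitively on the unit sphere of $V_k$; combined with the $U_{\text{TI}}$-invariance of $\mathcal{E}^{k}_{\text{TI}}$ already established in the paper's main text and the uniqueness of the invariant probability measure on a compact homogeneous space, this identifies $|\phi\rangle$ as exactly Haar-random within $V_k$, after which the textbook symmetric-subspace formula finishes the argument. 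Your individual steps all check out: $\tfrac{1}{t!}\bm{\Pi}_{t}P_k^{\otimes t}$ is indeed the projector onto $\mathrm{Sym}^t(V_k)$ because it is a product of commuting projectors, and the trace normalization fixes the constant. Your route buys two things the paper does not state explicitly: a conceptual strengthening (the ensemble $\mathcal{E}^{k}_{\text{TI}}$ \emph{is} the Haar ensemble on $V_k$, not merely an ensemble with matching moments), and a closed form $\alpha_t=N^t D_k(D_k+1)\cdots(D_k+t-1)$ with $D_k=\Tr(\mathbf{T}_k)/N$, whereas the paper deliberately leaves $\alpha_t$ unevaluated. Conversely, the paper's factorization trick is the more portable tool in situations where a constrained ensemble cannot be recognized as a Haar ensemble on a single subspace (e.g., when the relevant invariance group acts non-transitively), so the two arguments are complementary rather than redundant.
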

The proof of this result is given in Appendix \ref{trans_designs}. From Eq. (\ref{t-moments-res}), we notice that the T-invariance imposes an additional structure to the moments through the product of $\mathbf{T}^{\otimes t}_{k}$, where the Haar moments are uniform linear combinations of the permutation group elements. 
In the following section, we identify a sufficient condition for obtaining approximate state designs from the generic T-invariant generator states sampled from $\mathcal{E}^{k}_{\text{TI}}$. 

\section{Quantum state designs from T-invariant generator states}\label{emergent_designs}
In this section, we construct the projected ensembles from the T-invariant generator states sampled from $\mathcal{E}^{k}_{\text{TI}}$and verify their convergence to the quantum state designs. In particular, if $|\phi\rangle$ is drawn uniformly at random from $\mathcal{E}^{k}_{\text{TI}}$, we intend to verify the following identity:
\begin{eqnarray}\label{mean_t}
\mathbb{E}_{|\phi\rangle\in\mathcal{E}^{k}_{\text{TI}}} \left(\sum\limits_{|b\rangle\in\mathcal{B}}\dfrac{\left(\langle b|\phi\rangle\langle\phi|b\rangle\right)^{\otimes t}}{\left(\langle\phi|b\rangle\langle b|\phi\rangle\right)^{t-1}}\right)=\dfrac{\bm{\Pi}^{A}_{t}}{\mathcal{D}_{A, t}}, 
\end{eqnarray}
where $\mathcal{D}_{A, t}=2^{N_A}(2^{N_A}+1)...(2^{N_A}+t-1)$.  

The term on the left-hand side evaluates the $t$-th moment of the projected ensemble of a T-invariant state $|\phi\rangle$, with an average taken over all such states in $\mathcal{E}^{k}_{\text{TI}}$. It has been shown that for the Haar random states, the $t$-th moments of the projected ensembles are Lipshitz continuous functions with a Lipschitz constant $\eta=2(2t-1)$ \cite{cotler2023emergent}. Being $\mathcal{E}^{k}_{\text{TI}}\subset\mathcal{E}_{\text{Haar}}$, $\eta$ is also a Lipschitz constant for the case of $\mathcal{E}^{k}_{\text{TI}}$. Note that the number of independent parameters of a T-invariant state with momentum $k$ is $l=2\rank(\mathbf{T}_{k})-1$. Since $\mathbf{T}_k$ is a subspace projector, its eigenvalues assume values of either $N$ or zero. Therefore, $\rank(\mathbf{T}_{k})=\Tr(\mathbf{T}_{k})/N$. Hence, a random T-invariant state can be regarded as a point on a hypersphere of dimension $l=2\Tr(\mathbf{T}_{k})/N-1$. Then, if the above relation in Eq. (\ref{mean_t}) holds, Levy's lemma guarantees that any typical state drawn from $\mathcal{E}^{k}_{\text{TI}}$ will form an approximate state design. In the following, we first verify the identity in Eq. \eqref{mean_t} for $t=1$, followed by the more general case of $t>1$. We then compare the results against the Haar random generator states.

\subsection{Emergence of first-order state designs ($t=1$)}
\label{first-design}
From Eq. (\ref{Tmoments}) of the last section, the first moment of $\mathcal{E}^{k}_{\text{TI}}$ is
$\mathbb{E}_{|\phi\rangle\in\mathcal{E}^{k}_{\text{TI}}}\left(|\phi\rangle\langle\phi|\right) = \mathbf{T}_{k}/\alpha_{1}$, where $\alpha_{1}$ typically scales exponentially with $N$. If $N$ is a prime number, we can write $\alpha^{1}_{k}$ explicitly as follows \cite{nakata2020generic}:
\begin{equation}\label{trace_T_k}
\alpha_{1}=\Tr(\mathbf{T}_{k})=
\begin{cases}
    d^N+d(N-1) & \quad \textrm{if } k=0 \\
    d^N-d & \textrm{otherwise. }
\end{cases}
\end{equation}
To construct the projected ensembles, we now perform the local projective measurements on the $B$-subsystem. For $t=1$, the measurement basis is irrelevant. Then, for some generator state $|\phi\rangle\in\mathcal{E}^{k}_{\text{TI}}$, the first moment of the projected ensemble is given by $\sum_{|b\rangle\in\mathcal{B}}\langle b|\phi\rangle\langle\phi|b\rangle=\Tr_{B}(|\phi\rangle\langle\phi|)$. Typically $\Tr_{B}(|\phi\rangle\langle\phi |)$ approximates the maximally mixed state in the reduced Hilbert space $\mathcal{H}^{\otimes N_A}$. We verify this by averaging the partial trace over the ensemble $\mathcal{E}^{k}_{\text{TI}}$:
\begin{align}\label{N_A_1stmoment_gen}
\mathbb{E}_{\phi\in\mathcal{E}^{k}_{\text{TI}}}\left[\Tr_{B}\left(|\phi\rangle\langle\phi |\right)\right]&=\Tr_{B}\left[\mathbb{E}_{|\phi\rangle\in\mathcal{E}^{k}_{\text{TI}}}\left( |\phi\rangle\langle\phi| \right)\right]\nonumber\\
&=\dfrac{\Tr_{B}(\mathbf{T}_{k})}{{\alpha_{1}}}.
\end{align}
To examine the closeness of $\Tr_{B}(\mathbf{T}_{k})/\alpha_{1}$ to the maximally mixed state, we first expand $\mathbf{T}_{k}$ given in Eq. (\ref{Tstate}) as
\begin{align}\label{Ptrace}
\Tr_{B}(\mathbf{T}_{k})=2^{N}\left[\dfrac{\mathbb{I}_{A}}{2^{N_A}}+\dfrac{1}{2^N}\sum_{j=1}^{N-1}e^{2\pi ijk/N}\Tr_{B}(T^j)\right].
\end{align}
For $j=1$, it can be shown that $\Tr_{B}(T)=T_{A}$, where $T_A$ is the translation operator acting exclusively on the subsystem-$A$. For $j\geq 2$, $\Tr_{B}(T^j)=\sum_{b}\langle b|T^j|b\rangle$ outputs a random permutation operator supported over $A$ whenever $N_A\geq \gcd{(N, j)}$. If $N_A<\gcd{(N, j)}$, the partial trace would result in a constant times identity operator $\mathbb{I}_{2^{N_A}}$. Interested readers can find more details in Appendix \ref{ptrace}. If $N$ is prime and $2\leq j<N$, we have $\gcd{(N, j)}=1$, which is less than $N_A$ whenever $N_A>2$. Then, we can explicitly show that the trace norm of the second term on the right side in Eq. (\ref{Ptrace}) is bounded from above as follows: 
\begin{eqnarray}\label{1-design}
\dfrac{1}{2^N}\left\| \sum_{j=1}^{N-1}e^{2\pi ijk/N}\Tr_{B}(T^j) \right\|_{1}&\leq &\sum_{j=1}^{N-1}\dfrac{\left\| \Tr_{B}(T^j) \right\|_{1}}{2^N}\nonumber\\
&=&\dfrac{(N-1)}{2^{N_B}},
\end{eqnarray}
implying that it converges to a null matrix exponentially with $N_B$. Hence, Eq. (\ref{N_A_1stmoment_gen}) converges exponentially with $N_B$ to the maximally mixed state ($\rho_{A}\propto\mathbb{I}$) in the reduced Hilbert space $\mathcal{H}^{\otimes N_A}$ --- for $N_B\gg\log_2(N)$, we have $\mathbb{E}_{\phi}[\sum_{b}\langle b|\phi\rangle\langle\phi|b\rangle]\approx\mathbb{I}_{A}/2^{N_A}$. Then, as mentioned before, we can invoke Levy's lemma to argue that a typical $|\phi\rangle\in\mathcal{E}^{k}$ approximately generates a state-$1$ design. 

While we initially assumed that $N$ is prime, the results also hold qualitatively for non-prime $N$. In the latter case, for $N_A<\gcd(N, j)$, partial traces may yield identity operators, i.e., $\sum_{b}\langle b|T^j|b\rangle\propto\mathbb{I}_{2^{N_A}}$. These instances introduce slight deviations from Eq. (\ref{1-design}) but have a minor impact on the overall result. Since the error is exponentially suppressed, it is natural to expect that $\Delta^{(1)}$ for these states typically shows identical behavior as that of the Haar random states, and we confirm this with the help of numerical simulations.

Figure \ref{fig:haar_vs_tran} illustrates the decay of the trace distance versus $N_B$ for the first three moments, considering three different bases (see the description of the Fig. \ref{fig:haar_vs_tran} and the following subsection) and two momentum sectors. The blue-colored curves correspond to the first moment. The blue curves, irrespective of the choice of basis and the momentum sector, always show exponential decay, i.e., $\overline{\Delta^{(1)}}\sim 2^{-N/2}$, where the overline indicates that the quantity is averaged over a few samples. We further benchmark these results against the case of Haar random generator states. For the Haar random states, the results are shown in dashed curves with the same color coding used for the first moment. We observe that the results are nearly identical in both cases, with minor fluctuations attributed to the averaging over a finite sample size.

\begin{figure}[ht!]
\includegraphics[scale=0.35]{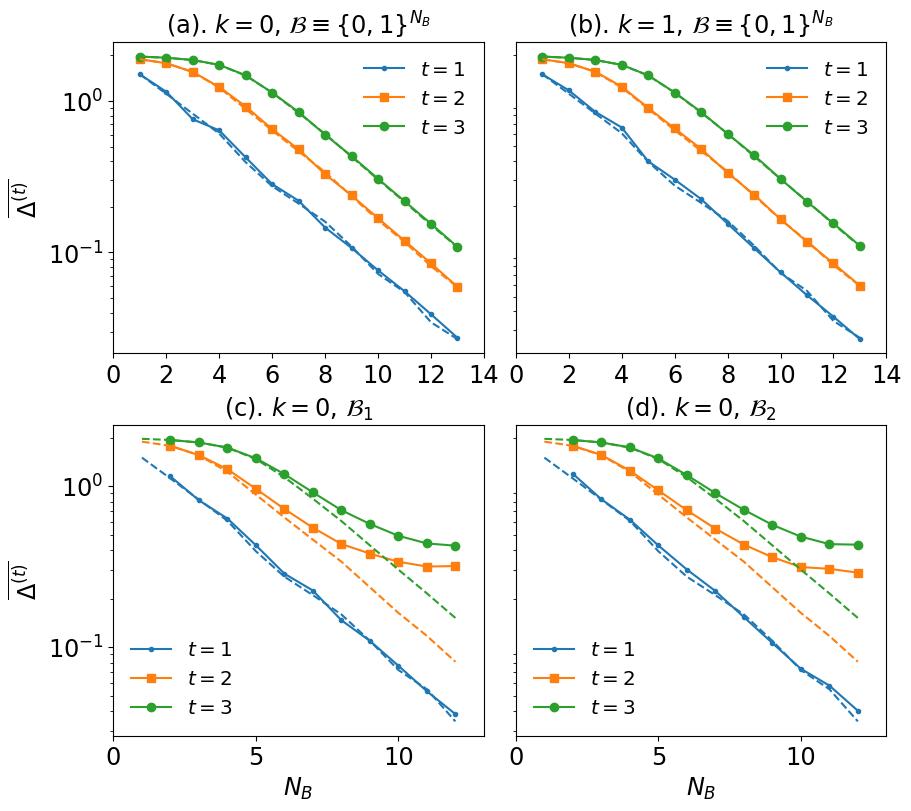}
\caption{\label{fig:haar_vs_tran} Illustration of $\overline{\Delta^{(t)}}$ versus $N_B$ for the projected ensembles of T-invariant generator states sampled uniformly at random from $\mathcal{E}^{k}_{\text{TI}}$. We fix $N_{A} = 3 $ and show the results for the first three moments. In the top panels (a) and (b), the measurements are performed in the computational basis: $\{\Pi_b = |b\rangle\langle b|$ for all $b\in \{0, 1\}^{N_B}\}$, where $\{0, 1\}^{N_B}$ denotes the set of all $N_B$-bit strings. While (a) represents the numerical computations in the $k=0$ momentum sector, $k=1$ is considered in panel (b). The results are averaged over $10$ initial generator states. The results appear qualitatively similar for both the momentum sectors. The trace distance falls to zero with an exponential scaling $\sim 2^{-N_B/2}$ for all the moments calculated. The calculations in (c) and (d) are performed in the momentum sector $k=0$ for two different measurement bases. In (c), we perform the measurements in the eigenbasis of the local translation operator supported over $B$ --- $T_B$. In (d), we break the local translation symmetry of $T_B$ by applying a local Haar random unitary. We perform measurements in the eigenbasis of the resulting operator. See the main text for more details. }   
\end{figure}

\subsection{Higher-order state designs ($t>1$)}
\label{second-design}
Here, we provide a condition for producing approximate higher-order state designs from the random T-invariant generator states. 
    
\begin{theorem}\label{sufficient}
{\textup{(Sufficient condition for the identity in Eq. (\ref{mean_t}))}} 
Given a measurement basis $\mathcal{B}$ having supported over the subsystem-$B$, then the identity in Eq. (\ref{mean_t}) holds if for all $|b\rangle\in\mathcal{B}$, $\langle b|\mathbf{T}_{k}|b\rangle =\mathbb{I}_{2^{N_A}}$. 
\end{theorem}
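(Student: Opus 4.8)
The plan is to compute the left-hand side of Eq.~(\ref{mean_t}) by separating the ``numerator'' (the $t$-th power of the unnormalized post-measurement weights) from the ``denominator'' $p_b^{t-1}$, evaluating the former exactly with the help of Result~\ref{Tran_designs} and the stated condition, and controlling the latter by concentration. First I would use linearity of expectation together with the identity $(\langle b|\phi\rangle\langle\phi|b\rangle)^{\otimes t}=\langle b|^{\otimes t}\big(|\phi\rangle\langle\phi|\big)^{\otimes t}|b\rangle^{\otimes t}$, where $\langle b|^{\otimes t}$ contracts the $B$-register of each of the $t$ replicas. Substituting the moment formula $\mathbb{E}_{|\phi\rangle\in\mathcal{E}^{k}_{\text{TI}}}[(|\phi\rangle\langle\phi|)^{\otimes t}]=\mathbf{T}^{\otimes t}_{k}\bm{\Pi}_t/\alpha_t$ from Result~\ref{Tran_designs} gives, for the expected numerator,
\[
\mathbb{E}\Big[\sum_{|b\rangle\in\mathcal{B}}(\langle b|\phi\rangle\langle\phi|b\rangle)^{\otimes t}\Big]=\frac{1}{\alpha_t}\sum_{|b\rangle\in\mathcal{B}}\langle b|^{\otimes t}\,\mathbf{T}^{\otimes t}_{k}\bm{\Pi}_t\,|b\rangle^{\otimes t}.
\]

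The heart of the argument, and the only place where the hypothesis enters, is the evaluation of this sum. Expanding $\bm{\Pi}_t=\sum_{\pi\in S_t}\pi$ and writing each permutation as $\pi=\pi_A\otimes\pi_B$, I would compute the matrix element of a single term between $A$-computational basis states $|a_1,\dots,a_t\rangle$ and $\langle a_1',\dots,a_t'|$. Because all $t$ copies of the $B$-register are set to the same $|b\rangle$, acting with $\pi$ merely permutes the $A$-labels, and the replica-wise action of $\mathbf{T}^{\otimes t}_{k}$ then factorizes the element into a product $\prod_{r=1}^{t}\big[\langle b|\mathbf{T}_{k}|b\rangle\big]_{a_r',\,a_{\pi^{-1}(r)}}$ of single-replica matrix elements of the $A$-operator $\langle b|\mathbf{T}_k|b\rangle$. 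Invoking the hypothesis $\langle b|\mathbf{T}_k|b\rangle=\mathbb{I}_{2^{N_A}}$ collapses each factor to $\delta_{a_r',\,a_{\pi^{-1}(r)}}$, i.e. to the matrix element of $\pi_A$, so that the term equals $2^{N_B}\pi_A$ after summing over the $2^{N_B}$ basis elements $|b\rangle$. Summing over $\pi$ yields the exact identity $\sum_{|b\rangle}\langle b|^{\otimes t}\mathbf{T}^{\otimes t}_{k}\bm{\Pi}_t|b\rangle^{\otimes t}=2^{N_B}\bm{\Pi}^{A}_{t}$, so the expected numerator is proportional to the symmetric projector $\bm{\Pi}^{A}_{t}$ on $A$.

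The remaining, and genuinely delicate, step is the denominator: the object to be computed is $\mathbb{E}[\,\mathrm{num}/p_b^{t-1}\,]$, not $\mathbb{E}[\mathrm{num}]/\mathbb{E}[p_b]^{t-1}$, and the expectation does not pass through the ratio. Here I would argue by typicality. Under the hypothesis the mean weight $\mathbb{E}[p_b]=\Tr_{A}(\langle b|\mathbf{T}_k|b\rangle)/\alpha_1=2^{N_A}/\alpha_1$ is independent of $b$, and since $\mathcal{E}^{k}_{\text{TI}}$ lives on a hypersphere of dimension $l=2\,\Tr(\mathbf{T}_k)/N-1$, Levy's lemma forces $p_b$ to concentrate exponentially (in $N_B$) about this mean, exactly as in the $t=1$ analysis of Sec.~\ref{first-design}. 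Consequently $p_b^{t-1}$ may be replaced by its $b$-independent typical value and pulled out of the sum, so that $\mathbb{E}[\rho^{(t)}]$ remains proportional to $\bm{\Pi}^{A}_{t}$ up to corrections that vanish with $N_B$.

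Finally I would fix the constant without further arithmetic by normalization: for every $|\phi\rangle$ the projected-ensemble moment $\rho^{(t)}=\sum_b p_b\,|\phi(b)\rangle\langle\phi(b)|^{\otimes t}$ has unit trace (since $\sum_b p_b=1$ and each $|\phi(b)\rangle$ is normalized), hence so does its average. A trace-one operator proportional to $\bm{\Pi}^{A}_{t}$ must equal $\bm{\Pi}^{A}_{t}/\Tr(\bm{\Pi}^{A}_{t})=\bm{\Pi}^{A}_{t}/\mathcal{D}_{A,t}$, which is precisely the right-hand side of Eq.~(\ref{mean_t}); as a cross-check the same constant emerges from $\alpha_t=N^{t}\prod_{i=0}^{t-1}(D_k+i)$ with $D_k=\Tr(\mathbf{T}_k)/N$. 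I expect the concentration of the Born weights to be the main obstacle, since it is what prevents Eq.~(\ref{mean_t}) from being an exact finite-size identity and restricts it, like the first moment, to holding up to corrections that are exponentially small in $N_B$.
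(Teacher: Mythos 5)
Your evaluation of the expected numerator is correct, and it in fact reproduces the central algebraic identity of the paper's own proof: for any replica number $m$, $\langle b|^{\otimes m}\mathbf{T}^{\otimes m}_{k}\bm{\Pi}_{m}|b\rangle^{\otimes m}=\langle b|\mathbf{T}_{k}|b\rangle^{\otimes m}\bm{\Pi}^{A}_{m}$, which the hypothesis collapses to $\bm{\Pi}^{A}_{m}$. The genuine gap is your treatment of the denominator $p_b^{t-1}$. You propose to replace $p_b$ by its mean on the grounds that Levy's lemma makes it concentrate ``exponentially in $N_B$,'' but what controls that replacement is the \emph{relative} fluctuation of $p_b$, not the absolute one: $\mathbb{E}[p_b]\sim 2^{-N_B}$ while $\mathrm{std}(p_b)\sim 2^{-N_B}\,2^{-N_A/2}$, so the relative fluctuation is $O(2^{-N_A/2})$ --- a constant in $N_B$, since $N_A$ is held fixed in this framework (the $t=1$ analysis you cite involves no division by $p_b$, so it lends no support). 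Worse, the numerator is correlated with $p_b$ (its trace is exactly $p_b^{t}$), so $\mathbb{E}[\mathrm{num}/p_b^{t-1}]$ differs from $\mathbb{E}[\mathrm{num}]/\mathbb{E}[p_b]^{t-1}$ by a systematic bias of relative order $2^{-N_A}$ that neither decays with $N_B$ nor averages away over $b$. This is fatal because Eq. (\ref{mean_t}) is an \emph{exact} identity --- that is what Result \ref{sufficient} asserts and what the paper proves --- and your final normalization trick cannot repair it: fixing the scalar by $\sum_b p_b=1$ works only if the expectation is already known to be exactly proportional to $\bm{\Pi}^{A}_{t}$, whereas the error operator produced by the typicality step is not controlled to be a multiple of $\bm{\Pi}^{A}_{t}$. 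Your closing remark that the Born-weight fluctuations prevent Eq. (\ref{mean_t}) from being an exact finite-size identity is the opposite of the truth under the stated hypothesis.

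The paper closes exactly this hole without any concentration argument: it expands $1/p_b^{t-1}=\left[1-(1-p_b)\right]^{-(t-1)}$ in a binomial series, so the ratio becomes an infinite sum of polynomial terms $(\langle b|\phi\rangle\langle\phi|b\rangle)^{\otimes t}\,\Tr[(\langle b|\phi\rangle\langle\phi|b\rangle)^{\otimes r}]$, each of which is a partial trace of $(\langle b|\phi\rangle\langle\phi|b\rangle)^{\otimes (t+r)}$ and is averaged \emph{exactly} using Result \ref{Tran_designs} --- i.e.\ your factorization identity applied at order $t+r$. Under the hypothesis every term loses its dependence on $b$ and on the $A$-degrees of freedom (since $\langle\psi b|\mathbf{T}_{k}|\psi b\rangle=1$), the infinite series collapses to a pure number independent of everything else, and the trace normalization then fixes that number to $1/2^{N_B}$, yielding Eq. (\ref{mean_t}) with no error term. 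If you wish to keep your two-step structure, the statement you would need in place of concentration is an \emph{independence} statement: under the hypothesis, the norm $\sqrt{p_b}$ and the direction $|\tilde{\psi}(b)\rangle/\sqrt{p_b}$ of the projected state are uncorrelated at every moment order, with Haar-distributed direction; this follows by comparing the factorization identity at all orders $m$, and it --- not typicality --- is what makes the correction terms vanish identically.
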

The proof is given in Appendix \ref{verification}. For a given basis vector $|b\rangle\in\mathcal{B}$, the condition is maximally violated if it can be extended to have support over the full system such that it becomes an eigenstate (with momentum charge $k$) of the translation operator. That is, if there exists an arbitrary $|a\rangle\in\mathcal{H}^{\otimes N_A}$ such that $T\left(|a\rangle\otimes |b\rangle\right) =e^{-2\pi ik/N}\left(|a\rangle\otimes |b\rangle\right)$. Then, the expectation of $\mathbf{T}_{k}$ in this state becomes $\langle ab|\mathbf{T}_{k}|ab\rangle =N$, which is in maximal violation of the sufficient condition \footnote{Note that the sufficient condition would require $\langle ab|\mathbf{T}_{k}|ab\rangle =1$}. For example, when $k=0$, the basis states $|0\rangle^{\otimes N_B}$ and $|1\rangle^{\otimes N_B}$ of the standard computational basis can be extended to $|0\rangle^{\otimes N}$ and $|1\rangle^{\otimes N}$ respectively. So, $\langle 0|^{\otimes N}\mathbf{T}_0|0\rangle^{\otimes N}=\langle 1|^{\otimes N}\mathbf{T}_0|1\rangle^{\otimes N}=N$, correspond to the maximal violation. On the other hand, most of the basis vectors of the computational basis satisfy the sufficient condition. It is also worth noting that if $|ab\rangle$ becomes a T-invariant state with a different momentum charge ($k'\neq k$), then $\langle ab| \mathbf{T}_{k} |ab\rangle=0$. Note that when the condition holds, we get $\langle ab|\mathbf{T}_{k}|ab\rangle =1$.    

The sufficient condition on the measurement basis for the emergent state designs has been deduced from the satisfiability of Eq. (\ref{mean_t}). Intuitively, the sufficient condition requires that the eigenbasis of $\mathbf{T}_{k}$ and the measurement basis ($\mathcal{B}\otimes \mathcal{A}$, where the basis states in $\mathcal{A}$ are supported over $\mathcal{H}^{\otimes N_A}$) should be nearly mutually unbiased. In other words, the measurement basis should have minimal correlation with the translation symmetric states with a given momentum charge --- for all $|b\rangle\in \mathcal{B}$ and any $|\phi\rangle\in\mathcal{E}^{k}_{\text{TI}}$, we have $\langle ab|\phi\rangle\langle\phi|ab\rangle\sim D^{-1}$, where $|a\rangle\in \mathcal{H}^{\otimes N_A}$ can be arbitrary and $D=d^{N}$ is the total Hilbert space dimension. In addition, when a random translation symmetric state is partially measured in such a basis, all the states in the projected ensemble are almost equally likely. Thus, when the condition is satisfied, all the outcomes on the unmeasured part approximate Haar random states in $\mathcal{H}^{\otimes N_A}$. Such bases can be referred to as ``symmetry-non-revealing," analogous to the ``energy-non-revealing" bases recently studied in Ref. \cite{mark2024maximum}. 

We now calculate the trace distance $\Delta^{(t)}$ and average it over a few sample states taken from $\mathcal{E}^{k}_{\text{TI}}$. We illustrate the results in Fig. \ref{fig:haar_vs_tran} for the second (orange color) and third (green color) moments by keeping $N_A$ and the measurement bases as before. 
Similar to the case of the first moment, we contrast the results with the case of Haar random generator states represented by the dashed curves. 
In the panels \ref{fig:haar_vs_tran}a and \ref{fig:haar_vs_tran}b corresponding to generator states with $k=0$ and $1$,  projective measurements in the computational basis show an exponential decay of the average trace distance with $N_B$ for both the moments. Additionally, on a semilog scale, the decays for all the moments appear to align along parallel lines at sufficiently large $N_B$ values. From the comparison between the Figs.\ref{fig:haar_vs_tran}a and \ref{fig:haar_vs_tran}b, it is evident that there are no noticeable differences when the generator states are chosen from different momentum sectors. We also consider the eigenbasis of the local translation operator $T_B$ for the measurements, of which a representative case for $k=0$ is shown in Fig. \ref{fig:haar_vs_tran}c. We observe the average trace distances deviate from the initial exponential decay and approach non-zero saturation values. In Fig. \ref{fig:haar_vs_tran}d, we consider the case where the translation symmetry is broken weakly by applying a single site Haar random unitary to the left of $T_B$, i.e., $T'_B=(u\otimes\mathbb{I}_{2^{N_B-1}})T_{B}$. We then consider the eigenbasis of the resultant operator $T'_B$ for the measurements on $B$. Surprisingly, the trace distance still saturates to a finite value for both the moments despite the broken translation symmetry. 
In the following subsection, we elaborate more on the interplay between the measurement bases and the sufficient condition derived in Result. \ref{sufficient}.

\subsection{Overview of the bases violating the sufficient condition} 
\label{violation}

\begin{figure}[ht!]
\includegraphics[scale=0.35]{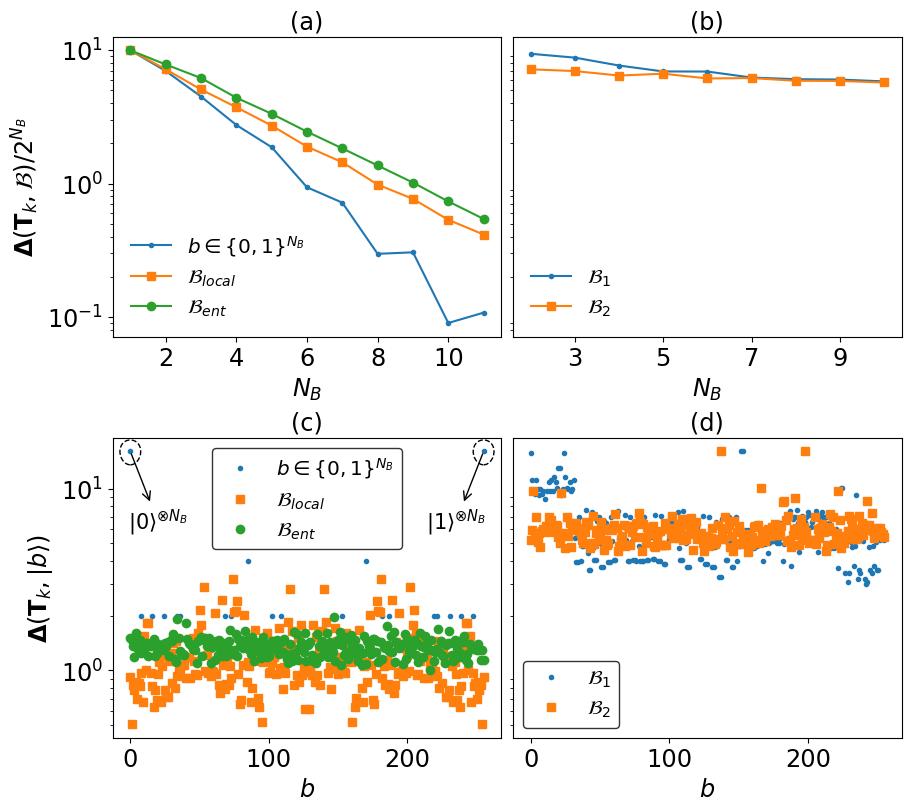}
\caption{\label{fig:viol}  The figure illustrates the average violation of the sufficient condition by different bases as quantified by $\bm{\Delta}(\mathbf{T}_{k}, \mathcal{B})/2^{N_B}$. {Here, we fix $N_{A}=3$.} In panel (a), $\bm{\Delta}(\mathbf{T}_{k}, \mathcal{B})/2^{N_B}$ versus $N_B$ is plotted for three different bases, namely, (i) the computational basis or $\sigma^z$ basis (blue), (ii) basis obtained by applying local Haar random unitaries on the computational basis (orange), and (iii) an entangled basis obtained by applying a Haar unitary of dimension $2^{N_B}$ on the computational basis (green). For all three bases, $\bm{\Delta}(\mathbf{T}_{k}, \mathcal{B})$ decays exponentially with $N_B$. In panel (b), the measurements are performed in the eigenbases of the operators $T_B$ and $u_{N_A+1}T_B$, where $T_B$ is the local translation operator supported over the subsystem $B$ and $u_{N_A+1}$ denotes a local Haar unitary acting on a site labeled with $N_A+1$. The violation stays nearly constant with $N_B$ for these two bases. In the bottom panels (c) and (d), the violation is quantified for each basis vector by fixing $N=11$. Here, we plot $\|\langle b| \mathbf{T}_{k} |b\rangle\|_1$ for each $|b\rangle\in\mathcal{B}$ for all the bases considered in the above panels. See the main text for more details. }   
\end{figure}

From Fig. \ref{fig:haar_vs_tran}, it is evident that not all measurement bases furnish higher-order state designs. Here, we analyze the degree of violation of the sufficient condition by different measurement bases. 
Some, like the computational basis, exhibit mild violations, while others significantly deviate from the condition. Given a measurement basis $\mathcal{B}$, we quantify the average violation of the sufficient condition using the quantity $\bm{\Delta}(\mathbf{T}_{k}, \mathcal{B})/2^{N_B}$, where 
\begin{align}
\bm{\Delta}(\mathbf{T}_{k}, \mathcal{B})=\sum_{|b\rangle\in\mathcal{B}}\left\|\langle b|\mathbf{T}_{k}|b\rangle -\mathbb{I}_{2^{N_A}}\right\|_{1}.     
\end{align}
In general, finding bases that fully satisfy the condition, implying $\bm{\Delta}(\mathbf{T}_{k}, \mathcal{B})=0$, is hard.
Depending upon $\mathcal{B}$, this quantity will display a multitude of behaviors. Also, note that for a single site unitary $u$, the local transformation of $|b\rangle$ to $|b'\rangle=u^{\otimes N_B}|b\rangle$ leaves $\bm{\Delta}(\mathbf{T}_{k}, \mathcal{B})$ invariant. To see the nature of the violation in a generic basis, we numerically examine $\bm{\Delta}(\mathbf{T}_{k}, \mathcal{B})/2^{N_B}$ versus $N_B$ for three different bases, namely, the computational basis, a Haar random product basis, and a Haar random entangling basis, all supported over $B$.

The results are shown in Fig. \ref{fig:viol}. In Fig. \ref{fig:viol}a, the blue curve represents the violation for the computational basis. Clearly, the decay of the violation is exponential and faster than the other cases considered. The orange curve represents the case of local random product basis. This can be obtained by applying a tensor product of Haar random local unitaries on the computational basis vectors. As the figure depicts, the violation still decays exponentially but slower than in the case of computational basis. Finally, we consider a random entangling basis by applying global Haar unitaries on the computational basis vectors. The violation still decays exponentially but slower than the previous two. In Fig. \ref{fig:viol}c, we plot the violation for each basis vector of the above bases considered while keeping $N_B$ fixed. We see that, except for a few vectors, the violation stays concentrated near a value of order $O(1)$. In the computational basis, the only vectors $|0\rangle^{\otimes N_B}$ and $|1\rangle^{\otimes N_B}$ show the maximal violation, which are encircled/marked in Fig. \ref{fig:viol}c. We consider the violation is significant if $\bm{\Delta}(\mathbf{T}_{k}, \mathcal{B})/2^{N_B}$ does not decay with $N_B$. If this happens, the projected ensembles may fail to converge to the state designs even in the large $N_B$ limit. Note that, while the exponential decay of the violation may appear generic, there exist bases that show nearly constant violation as $N_B$ increases, which are depicted in Fig. \ref{fig:haar_vs_tran}b and \ref{fig:haar_vs_tran}d.

To explore such measurement bases with significant violations, we consider the following.
\begin{align}\label{tria}
\bm{\Delta}(\mathbf{T}_{k}, \mathcal{B})&=\sum_{|b\rangle\in\mathcal{B}}\left\|\langle b|\sum_{j=0}^{N-1}e^{2\pi ijk/N}T^j|b\rangle -\mathbb{I}_{2^{N_A}}\right\|_{1}\nonumber\\     
&=\sum_{|b\rangle\in\mathcal{B}} \left\| \sum_{j=1}^{N-1}e^{2\pi ijk/N} \langle b| T^j |b\rangle \right\|_{1} \nonumber\\
&=\sum_{|b\rangle\in\mathcal{B}}\left(\left\|e^{2\pi ir/N} \langle b|T^{r}|b\rangle \right.\right.\nonumber\\
&\left.\left.+\sum_{j\neq r}e^{2\pi ijk/N} \langle b| T^j |b\rangle \right\|_{1}\right).
\end{align}
In the second line, we used the fact $\langle b|\mathbb{I}_{2^N}|b\rangle =\mathbb{I}_{2^{N_A}}$ and subtracted it from $e^{2\pi i 0/N}\langle b|T^0|b\rangle$. 
In the third equality, a term corresponding to an arbitrary integer $r$ in the summation, where $1 \leq r \leq N-1$, has been isolated from the remaining terms. This enables the analysis of violations with respect to each element of the translation group, facilitating the identification of the violating bases. To illustrate this, we consider the specific case where $r=1$:
\begin{eqnarray}
\langle b|T|b\rangle&=&\langle b|S_{1, 2}S_{2, 3}\cdots S_{N-1, N}|b\rangle\nonumber\\    
&=&\left(S_{1, 2}\cdots S_{N_A-1, N_A}\right)\nonumber\\
&&\hspace{2cm}\langle b|S_{N_A, N_A+1}\cdots S_{N-1, N}|b\rangle\nonumber\\
&=&\int_{u}d\mu(u)\left(T_Au^{\dagger}_{N_A}\right)\langle b|u_{N_A+1}T_B|b\rangle, 
\end{eqnarray}
where $S_{i,i+1}$ denotes the swap operator between $i$ and $i+1$ sites, $T_A$ and $T_B$ are translation operators locally supported over the subsystems $A$ and $B$. In the third equality, $S_{N_A, N_A+1}$ is replaced by the unitary Haar integral expression of the swap operator --- $S_{N_A, N_A+1}=\int_{u}d\mu(u)(u^{\dagger}_{N_A}\otimes u_{N_A+1})$, where $d\mu(u)$ represents the invariant Haar measure over the unitary group $U(2)$ \cite{zhang2014matrix}. Then, we heuristically argue that the measurements in the eigenbasis of the operator $u_{N_A+1}T_B$ for an arbitrary $u$ would lead to $\bm{\Delta}(\mathbf{T}_{k}, \mathcal{B})\sim O(2^{N})$. Consequently, the average violation $\bm{\Delta}(\mathbf{T}_{k}, \mathcal{B})/2^{N_B}$ stays nearly a constant of order $O(2^{N_A})$ even in the large $N_B$ limit. We illustrate this by considering the eigenbases of the operator $u_{N_A+1}T_B$ in Fig. \ref{fig:viol}b and \ref{fig:viol}d for two cases of $u_{N_B}$, namely, $u_{N_B}=\mathbb{I}_{2}$ and a Haar random $u_{N_B}$. Infact, we considered the same measurement bases in Figs. \ref{fig:haar_vs_tran}c and \ref{fig:haar_vs_tran}d and observed that the projected ensembles do not converge to the higher-order state designs.
It is interesting to notice that the measurement bases that do not respect the translation symmetry can also hinder the design formation [see Fig. \ref{fig:haar_vs_tran}d]. In Appendix \ref{vior2}, we elaborate this aspect further by considering $r=2$ in Eq. (\ref{tria}).

\section{Generalization to other symmetries}
\label{genn}
In the preceding sections, we examined the emergence of state designs from the translation symmetric generator states. In this section, we extend these findings to other symmetries, specifically considering $Z_{2}$ and reflection symmetries as representative examples.

\subsection{$Z_2$-symmetry}
\label{z2s}
The group associated with $Z_2$-symmetry consists of two elements $\{\mathbb{I}_{2^N}, \Sigma\}$, where $\Sigma=\otimes_{i=1}^{N}\sigma^{x}_{i}$. If a system is $Z_2$-symmetric, its Hamiltonian will be invariant under the action of $\Sigma$, i.e., $\Sigma  H\Sigma=H$. On the other hand, a quantum state $|\psi\rangle$ is considered $Z_2$-symmetric if it is an eigenstate of the operator $\Sigma$ with an eigenvalue $\pm 1$. Here, similar to the case of translation symmetry, we first construct an ensemble of $Z_2$-symmetric states by projecting the Haar random states onto a $Z_2$-symmetric subspace. We then follow the analysis of state designs using the projected ensemble framework.

Given a Haar random state $|\psi\rangle\in\mathcal{E}_{\text{Haar}}$ that has support over $N$-sites, then
\begin{eqnarray}
|\psi\rangle\rightarrow |\phi\rangle=\dfrac{1}{\mathcal{N}}\mathbf{Z}_{k}|\psi\rangle
\end{eqnarray}
is a $Z_2$-symmetric state with an eigenvalue $(-1)^k$ with $k\in\{0, 1\}$, 
where $\mathbf{Z}_k=\mathbb{I}+(-1)^k\Sigma\quad\text{and }\mathcal{N}=\sqrt{\langle\psi|\mathbf{Z}^{\dagger}_{k}\mathbf{Z}_{k}|\psi\rangle}=\sqrt{2\langle\psi|\mathbf{Z}_{k}|\psi\rangle}$. Introducing this symmetry reduces the randomness of the Haar random state by a factor of $2$. Here, the number of independent complex parameters needed to describe the state scales like $\sim O(2^{N-1})$, which can be contrasted with the $O(2^{N})$ variables required for Haar random states. Using similar techniques employed for the T-invariant states, the moments of $Z_2$-symmetric ensembles can be evaluated as
\begin{eqnarray}
 \mathbb{E}_{\phi\in\mathcal{E}^{k}_{Z_{2}}}\left[\left[ |\phi\rangle\langle\phi| \right]^{\otimes t}\right] = \dfrac{\mathbf{Z}^{\otimes t}_{k}\bm{\Pi}_{t}}{\Tr\left(\mathbf{Z}^{\otimes t}_{k}\bm{\Pi}_{t}\right)}.
\end{eqnarray}

The on-site nature of the $\mathbb{Z}_{2}$-symmetry allows us to write closed-form expressions for the moments of the projected ensembles irrespective of the choice of the measurement basis. Through detailed analytical derivation [see Appendix \ref{app-moments}], we show that the $t$-th order moment of the projected ensembles averaged over initial generator states takes the following form:
\begin{eqnarray}\label{z2moment}
\mathcal{M}^{t}_{\mathbb{Z}_{2}}= \dfrac{1}{\mathcal{N}}\sum_{|b\rangle\in\mathcal{B}}\langle b|\mathbf{Z}_{k}|b\rangle^{\otimes t}\bm{\Pi}^{A}_{t},    
\end{eqnarray}
where $\mathcal{N}$ denotes the normalization constant and is given by $\Tr\left( \sum_{|b\rangle\in\mathcal{B}}\langle b|\mathbf{Z}_{k}|b\rangle^{\otimes t}\bm{\Pi}^{A}_{t} \right)$. Whenever $\langle b|\mathbf{Z}_{2}|b\rangle =\mathbb{I}_{2^{N_A}}$ for all $|b\rangle\in \mathcal{B}$, as required by the sufficient condition, right-hand side of the Eq. (\ref{z2moment}) equates to the $t$-th order Haar moment. In the following, we examine the projected ensembles for a few different choices of measurement bases.

To analyze the projected ensembles, we first fix the measurements on $N_B$ sites in the computational basis, i.e., $\{|b\rangle\langle b|\}$ for all $|b\rangle\in\{0, 1\}^{N_B}$. To get approximate state designs, it is sufficient to have $\langle b|\mathbf{Z}_{k}|b\rangle\approx\mathbb{I}_{2^{N_A}}$ for a sufficiently large number of $b\in\{0, 1\}^{N_B}$. In the case of $Z_2$-symmetry, this is indeed satisfied as we have $\langle b|\mathbf{Z}_{k}|b\rangle=\mathbb{I}_{2^{N_A}}+(-1)^{k}\langle b|\Sigma|b\rangle$, where the second term can be simplified as 
\begin{equation}
\langle b|\Sigma|b\rangle =\left(\otimes_{j=1}^{N_A}\sigma^{x}_{j}\right)\left(\langle b_1|\sigma^x|b_1\rangle ...\langle b_{N_B}|\sigma^x|b_{N_B}\rangle\right).     
\end{equation}
The terms within the brackets are the diagonal elements of $\sigma^x$ operator, which are zeros in the computational basis, implying that $\langle b|\Sigma|b\rangle=0$. Therefore, in this case, the sufficient condition is exactly satisfied. Consequently, the projected ensembles converge to the state designs for large $N_B$. The numerical results for the average trace distance are shown in Fig. \ref{fig:ZZ2}. We notice that the results nearly coincide with the case of Haar random generator states [see Fig. \ref{fig:ZZ2}a and \ref{fig:ZZ2}b]. 

\begin{figure}[ht!]
\includegraphics[scale=0.35]{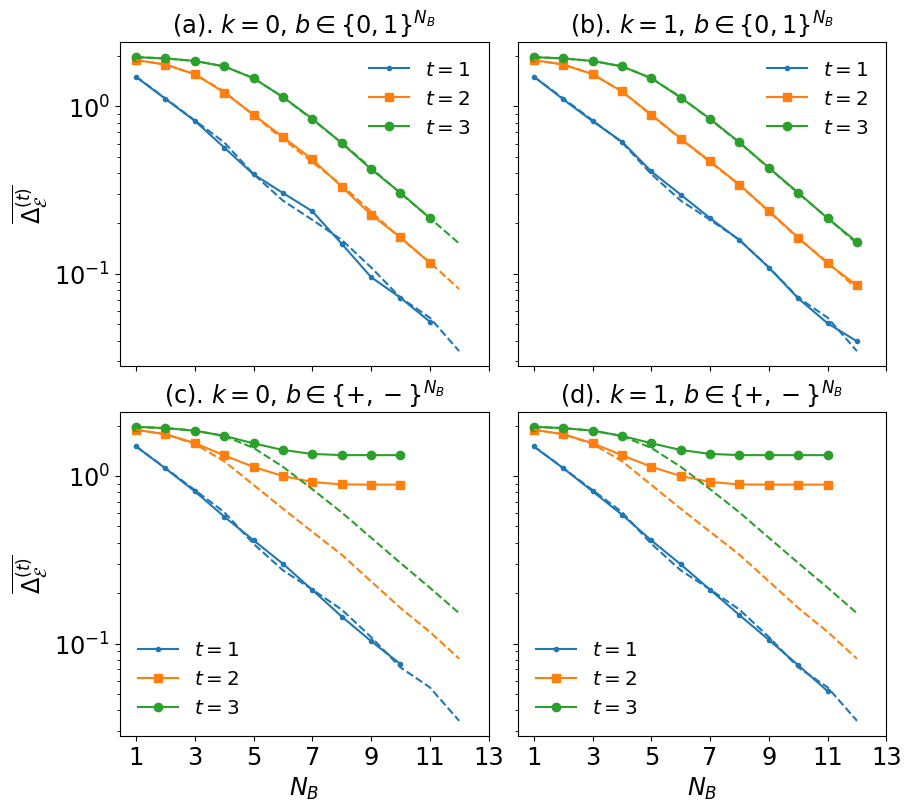}
\caption{\label{fig:ZZ2}  Average trace distance ($\overline{\Delta^{(t)}}$) between the moments of the projected ensembles and the moments of the Haar random states supported over $N_A$ sites, plotted against $N_B$. We fix $N_A=3$. The initial states are chosen uniformly at random from the ensemble of $Z_2$-symmetric quantum states. The averages are computed over $10$ samples of the initial generator states. In (a) and (b), the measurements are performed in the computational ($\sigma^{z}$) basis --- $\{|b\rangle\langle b|\}$ for all $b\in\{0, 1\}^{N_B}$. While the states chosen in (a) have the eigenvalue $1$, the other panel is plotted for the states with eigenvalue $-1$. We repeat the same calculation in the panels (c) and (d) with the measurement basis given by $\{|b\rangle\langle b|\}$ for all $b\in\{+, -\}^{N_B}$, where $|+\rangle$ and $|-\rangle$ represent the eigenstates of $\sigma^x$ and are connected to $|0\rangle$ and $|1\rangle$ through the Hadamard transform. 
} 
\end{figure}

However, if the measurements are performed in $\sigma^x$ basis, given by $\{|b\rangle\langle b|\}$ for all $b\in\{+, -\}^{N_B}$, where $|+\rangle=(|0\rangle + |1\rangle)/\sqrt{2}$ and $|-\rangle=(|0\rangle - |1\rangle)/\sqrt{2}$, then $\langle \pm|\sigma^x|\pm\rangle=\pm 1$. As a result, the projected ensembles deviate significantly from the quantum state designs. The violation from the sufficient condition in this case, as quantified by $\bm{\Delta}(\mathbf{Z}_{k}, \mathcal{B})/2^{N_B}$, remains a constant for any $N_B$: 
\begin{align}
\dfrac{\bm{\Delta}(\mathbf{Z}_{k}, \mathcal{B})}{2^{N_B}} &=\dfrac{1}{2^{N_B}}\sum_{b\in\{+, -\}^{N_B}}\left\| \langle b|\mathbf{Z}_{k}|b\rangle -\mathbb{I}_{2^{N_A}} \right\|_{1}\nonumber\\
&=\dfrac{1}{2^{N_B}}\sum_{b\in\{+, -\}^{N_B}}\nonumber\\
&\hspace{1.2cm}\left\| (-1)^{k+\sum_{i=1}^{N_B}\text{sgn}(b_i)}\otimes_{j=1}^{N_A} \sigma^{x}_{j} \right\|\nonumber\\
&=2^{N_A}.
\end{align}
We demonstrate the numerical results of the average distance in Figs. \ref{fig:ZZ2}c and \ref{fig:ZZ2}d. In contrast to the computational basis measurements, here, only the first moment coincides with the case of Haar random generator states, while the higher moments appear to saturate to a finite value of $\overline{\Delta^{(t)}}$. {Interestingly, in this case, the average $t$-th order moment of the projected ensembles as obtained in Eq. (\ref{z2moment}) admits the following simple form [see Appendix \ref{app-moments} for details]:
\begin{eqnarray}
\mathcal{{M}}^{t}_{\mathbb{Z}_{2}}=\dfrac{1}{\mathcal{N}}\left( \mathbf{Z}^{\otimes t}_{0, N_A} + \mathbf{Z}^{\otimes t}_{1, N_A}\right) \mathbf{\Pi}^{t}_{A}    
\end{eqnarray}
with an appropriate normalizing constant $\mathcal{N}$. Here, $\mathbf{Z}_{0(1), N_A}$ denotes the $\mathbb{Z}_{2}$-symmetric subspace projector with corresponding charge when the system size is $N_A$. 
}

We now consider a case where we perform the local measurements on $N_B$-th site in $\sigma^z$ basis while keeping $\sigma^x$ measurement basis for the remaining $N_B-1$ sites. We represent the resultant basis with $b'\in\{+, 1\}^{N-1}\times \{0, 1\}$. Then,
\begin{eqnarray}
\langle b'|\mathbf{Z}_k|b'\rangle=\mathbb{I}_{2^{N_A}}+(-1)^{k}\langle b'|\Sigma|b'\rangle,     
\end{eqnarray}
where the second term vanishes as $\langle b'_{N_B}|\sigma^{x}_{N_B}|b'_{N_B}\rangle =0$. Therefore, we have $\langle b'|\Sigma|b'\rangle =0$ for all $|b'\rangle$, implying the required condition for the convergence of projected ensembles to the quantum state designs.
The corresponding results for the average trace distances are shown in Fig. \ref{fig:ZZ2plustrans}a. In Fig. \ref{fig:ZZ2plustrans}b, we replace the $\sigma^x$ basis on $N_B$-th site with a local Haar random basis. We find no significant differences between \ref{fig:ZZ2plustrans}a and \ref{fig:ZZ2plustrans}b within the range of $N_B$ considered for the numerical simulations.  
This demonstrates that a mild modification of the measurement basis can retrieve the state design if the sufficient condition is satisfied. So, the sufficient condition allows us to infer suitable measurement bases for obtaining higher-order state designs, which might not be apparent otherwise. Moreover, by gradually switching the local measurement basis (over $N_B$-th site) from $\sigma^x$ to $\sigma^z$, we can observe a transition in the randomness of the projected ensembles as characterized by $\overline{\Delta^{(t)}}$. In particular, we can choose the eigenbasis of $\alpha \sigma^z+(1-\alpha)\sigma^x$ for the measurements on $N_B$-th site. We observe that as $\alpha$ varies, $\overline{\Delta^{(t)}}$ undergoes a transition from a system-size independent constant value to a value that is sensitive to the system size. For more details, we refer to the Appendix. \ref{Z2transition}, where we examine the violation of the sufficient condition as a function of $\alpha$.

For completeness, we also demonstrate the emergence of state designs from the generator states that respect both translation and $Z_2$-symmetry. Since $T$ and $\Sigma$ commute, we can easily construct the states that respect both the symmetries by applying corresponding projectors consecutively on an initial state. Let $|\psi\rangle$ denote a Haar random state, then $|\phi\rangle =\mathbf{T}_{k_1}\mathbf{Z}_{k_2}|\psi\rangle /\sqrt{\mathcal{N}}$, where $\mathcal{N}=\sqrt{2N\left(\mathbf{T}_{k_1}\mathbf{Z}_{k_2}\right)}$, is a random vector that is simultaneously an eigenvector of both $T$ and $\Sigma$ with respective charges $k_1$ and $k_2$. Then, the condition to get quantum state designs from the projected ensembles would be $\langle b|\mathbf{T}_{k_1}\mathbf{Z}_{k_2}|b\rangle =0$ for all $|b\rangle\in\mathcal{B}$. The projective measurements in the standard computational basis mildly violate this condition, which results in the convergence towards state designs. The numerical results are shown in \ref{fig:ZZ2plustrans}c and \ref{fig:ZZ2plustrans}d. While the former is plotted by taking the computational basis measurements, the latter represents the results for the local measurements in $\sigma^x$ basis, i.e., $b\in\{+, -\}^{N_B}$.

\begin{figure}[ht!]
\includegraphics[scale=0.35]{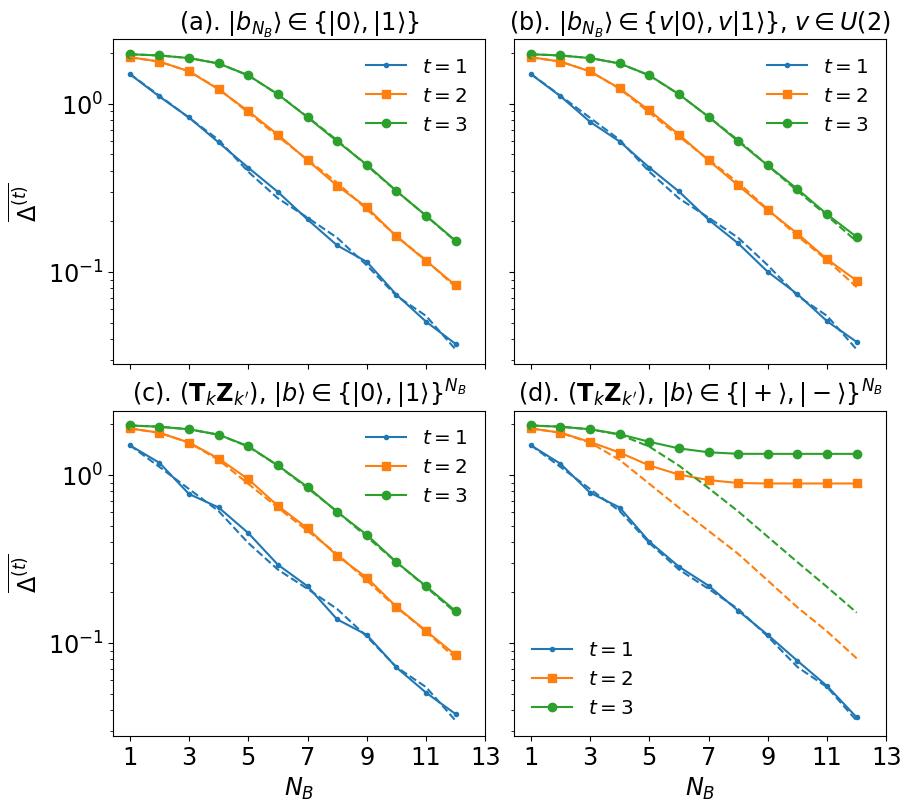}
\caption{\label{fig:ZZ2plustrans} The figure illustrates $\overline{\Delta^{(t)}}$ versus $N_B$. We fix $N_A=3$. In (a), the measurements are performed in the eigenbasis of the operator $\sigma^{x}\otimes\cdots\otimes\sigma^x\otimes \sigma^z$, where the tensor product of $\sigma^x$ operators have support over $N_B-1$ sites and $\sigma^z$ is supported over $N_B$-th site. In (b), we replace the eigenbasis of $\sigma^z$ on $N_B$-th site with an eigenbasis of single site Haar random unitary. Here, we fix the charge $k=0$. In (c) and (d), we take the random generator states that are simultaneous eigenstates of both $T$ and $\Sigma$. While the measurement basis in $c$ is the computational basis, the eigenbasis of $\sigma^x$ is considered for measurements in (d). Here, the charges are fixed at $k=k'=0$. In all the above panels, the averages are computed over $10$ realizations of the initial generator states.} 
\end{figure}

\subsection{Reflection symmetry}
Here, we employ the projected ensemble framework for the generator states having reflection or mirror symmetry. In a system exhibiting reflection symmetry, the Hamiltonian remains invariant under swapping of mirrored sites around the center. Let $R$ denote the reflection operation. Then $R$ generates a cyclic group of two elements, namely, the identity $\mathbb{I}$ and $R$ itself. In an $N$-qubit system, $R$ is defined as 
\begin{eqnarray}
R=
\begin{cases}
    S_{1, N}S_{2, N-1}....S_{N/2, N/2+1}      & \hspace{-0.4cm}\quad \textrm{if } N\textrm{ is even} \\
    S_{1, N}S_{2, N-1}....S_{(N-1)/2, (N+3)/2} & \hspace{-0.4cm}\quad \textrm{if } N\textrm{ is odd.}
\end{cases}\nonumber\\
\end{eqnarray}
The reflection operator has $\{-1, 1\}$ as its eigenvalues. Hence, the total Hilbert space admits a decomposition into two invariant sectors. Then, the Hermitian operators $\mathbf{R}_{\pm}=\mathbb{I}\pm R$ project arbitrary states onto the respective subspaces. Engineering state designs from these generator states would require $\langle b|\mathbf{R}_{\pm}|b\rangle =\mathbb{I}_{2^{N_A}}$ for all $|b\rangle\in\mathcal{B}$. 

We consider a product basis $\mathcal{B}\equiv\{u|0\rangle, u|1\rangle\}^{\otimes N_B}$ for the measurements, where $u$ is an arbitrary unitary operator. For some $|b\rangle\in\mathcal{B}$, we have $\langle b|\mathbf{R}_{\pm}|b\rangle =\mathbb{I}_{2^{N_A}}\pm \langle b|R|b\rangle$. Since $\mathcal{B}$ is assumed to be a local product basis, we can write $|b\rangle =|b_{N_A+1}b_{N_A+2}...b_{N}\rangle$. To be explicit in calculating $\langle b|R|b\rangle$, let us consider $N_A=3$ and  $N_B=N-3\geq N_A$. Then,  
\begin{align}
 \langle b|R|b\rangle= & \underbrace{|b_{N}b_{N-1}b_{N-2}\rangle\langle b_{N}b_{N-1}b_{N-2}|}_{\text{Supported on }A} \nonumber\\ 
& \underbrace{\delta_{b_4, b_{N-3}}  \delta_{b_5, b_{N-4}}...\delta_{b_{(N-1)/2}, b_{(N+3)/2}}}_{\text{palindrome condition}}.
\end{align}
Thus, $\langle b|R|b\rangle$ remains a non-zero operator only when the palindrome condition on the first $N_B-N_A$ bits of the string-$b$ is satisfied. If $N_B-N_A$ is even (odd), then we have a total of $n_{\text{even}}=2^{(N_B-N_A)/2}$ ($n_{\text{odd}}=2^{(N_B-N_A+1)/2}$) distinct palindromes. Then, the total number of violations of the condition for the given measurement basis will be $n_{\text{even}(\text{odd})}2^{N_A}$. For even-$N$, this number is exactly $2^{N/2}$. Moreover, the violation of the sufficient condition in the considered basis is
\begin{eqnarray}
\dfrac{\bm{\Delta}(\mathbf{R}_{k}, \mathcal{B})}{2^{N_B}}&=&\dfrac{1}{2^{N_B}}\sum_{|b\rangle\in\mathcal{B}}\| \langle b|R|b\rangle \|_{1}\nonumber\\
&=&
\begin{cases}
    2^{-N/2},& \text{if } N \text{ is even}\\
    2^{-(N-1)/2},              & \text{otherwise.}
\end{cases}
\end{eqnarray}
Since $\bm{\Delta}(\mathbf{R}_{k}, \mathcal{B})/2^{N_B}$ is exponentially suppressed as $N$ increases, the moments of the projected ensembles converge to the Haar moments.  
We plot the average $\Delta^{(t)}$ versus $N_B$ in Fig. \ref{fig:ref}. To illustrate, we consider the computational basis and random entangling basis for the measurement in  \ref{fig:ref}a and \ref{fig:ref}b, respectively. The later basis states can be obtained by the application of a fixed Haar random unitary supported over $B$ on the computational basis vectors. The results in both panels coincide with the case of Haar random generator states, implying the generation of higher-order state designs. 

\begin{figure}
\includegraphics[scale=0.35]{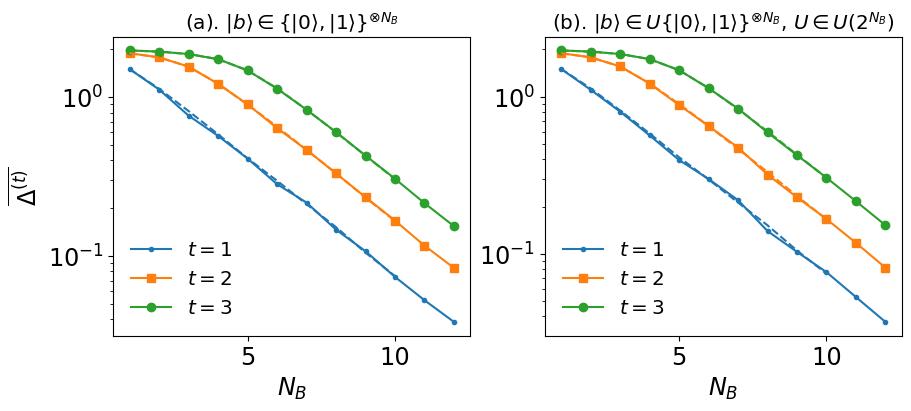}
\caption{\label{fig:ref} Illustration of $\overline{\Delta^{(t)}}$ vs $N_B$ for the random generator states with the reflection symmetry for the first three moments. We fix the charge $k=0$ {and $N_A=3$}. In (a), the computational basis measurements are considered. In (b), the measurements are performed in a random product basis. We find no noticeable differences between these two. Moreover, the decay nearly coincides with the case when the generator states are Haar random. In both (a) and (b) panels, the averages are taken over $10$ realizations of the initial generator states.
} 
\end{figure}

\subsection{Brief comment on the continuous symmetric cases}
So far, we have focused on the random generator quantum states with discrete symmetry group structures and examined the emergence of state designs with respect to various measurement bases. Constructing projectors onto the subspaces that conserve the charge of the symmetry operators lies at the heart of our formalism. We highlight that our formalism equally applies to the cases involving continuous symmetries, provided one can construct projectors onto the charge-conserving subspaces. For instance, the total magnetization conservation $Q=\sum_{j}\sigma^z_{j}$ (or equivalently $U(1)$ symmetry) generates continuous symmetry group. In this case, the projector onto the charge conserving sector with the total charge $s$ can be written in the computational basis as 
\begin{equation}
\mathbf{Q}_{s}= \sum_{f\in\{0, 1\}^{N}/|f|=s}|f\rangle\langle f|,\hspace{-0.2cm}\quad \text{where }|f|=\sum_{j=0}^{N-1}(-1)^{f_j+1}
\end{equation}
Therefore, the sufficient condition for the emergence of state designs from the random eigenstates of $Q$ with the charge $s$ is $\langle b| \mathbf{Q}_{s} |b\rangle =\mathbb{I}_{2^{N_A}}$ for all $|b\rangle\in\mathcal{B}$. In this case, the computational basis measurements are unsuitable for extracting state designs from the projected ensembles. One can then extract the state designs from the projected ensembles by carefully choosing the measurement basis.

\section{Deep thermalization in a chaotic Hamiltonian}
\label{sising}

\begin{figure*}
\includegraphics[scale=0.5]{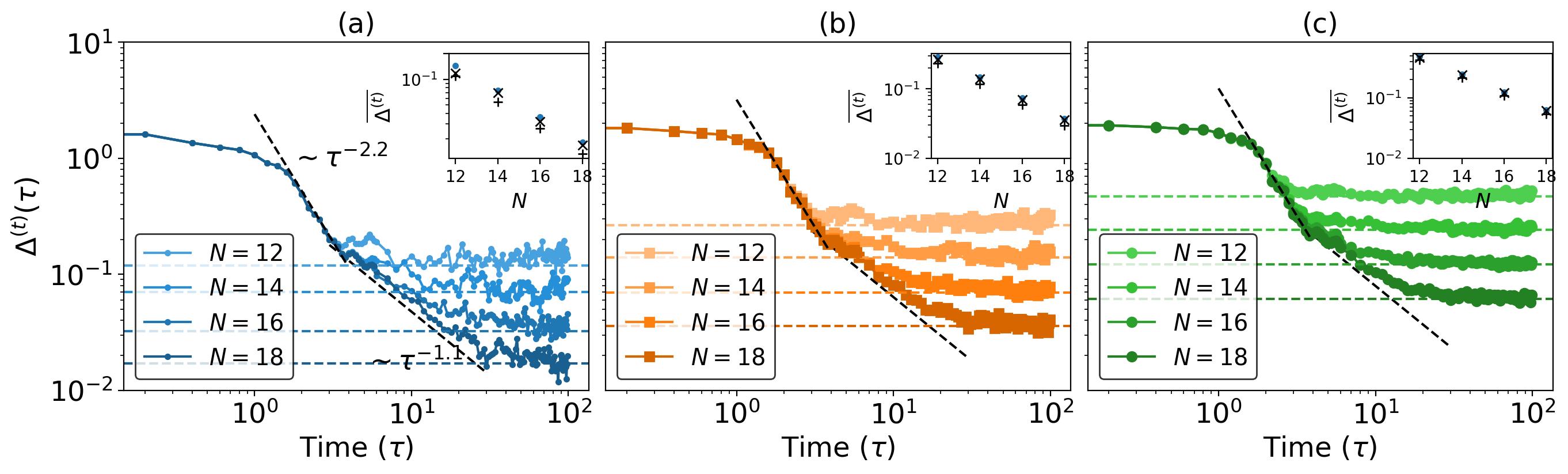}
\caption{\label{fig:ising-phy} Deep thermalization or dynamical generation of state designs (as characterized by $\Delta^{(t)}(\tau)$) of a quantum state $|\psi(\tau)\rangle$ evolved under the dynamics of a chaotic Ising Hamiltonian with periodic boundary conditions. Here, the evolution time is denoted with $\tau$, and the initial state is taken to be $|0\rangle^{\otimes N}$. The results are sequentially shown for the first three moments ($t=1, 2$, and $3$) in the panels along the row. The size of the projected ensembles $N_A$ is fixed at $3$. For each moment, the numerics are carried out for different system sizes varying from $N=12$ to $N=18$. Note the color scheme: 
darker to lighter shading of the colors represents larger to smaller system sizes. 
The dashed horizontal lines in all the panels represent the value attained on average by a typical random state, which is both translation symmetric (with momentum charge $k=0$) and a common eigenstate of the reflection operators about every site, with eigenvalue $1$ [see Appendix \ref{benchmark_latetim}]. From the numerical results, we observe a two-step relaxation of $\Delta^{(t)}(\tau)$ towards the saturation. (insets) {Comparison of the long-time averages of $\Delta^{t}(\tau)$ with the average trace distance when the generator states are random simultaneous eigenvectors of the translation and reflection operators (shown with $\cross$-markers) and random translation symmetric states (as shown with $+$-markers)}. The insets reveal that the long-time averages of $\Delta^{(t)}(\tau)$ coincide well with the former case while the latter case shows slight deviations. See the main text for more details. 
} 
\end{figure*}

In the preceding sections, our analysis focused primarily on obtaining state designs from Haar random states with symmetry. Here, we examine the dynamical generation of the state designs in a tilted field Ising chain with periodic boundary conditions (PBCs). The corresponding Hamiltonian is given by 
\begin{eqnarray}\label{ising}
H=\sum_{i=1}^{N}\sigma^{x}_{i}\sigma^{x}_{i+1}+h_x\sum_{i=1}^{N}\sigma^{x}_{i}+h_y\sum_{i=1}^{N}\sigma^{y}_{i}, 
\end{eqnarray}
where the PBCs correspond to $\sigma^{x, y, z}_{N+i}=\sigma^{x, y, z}_{i}$. {The periodicity, along with the homogeneity of the interactions and the magnetic fields, makes the system translation invariant, i.e., $[H, T]=0$.} In addition, the Hamiltonian is invariant under reflections about all the sites. For the parameters $h_x=(\sqrt{5}+1)/4$ and $h_y=(\sqrt{5}+5)/8$, the system is chaotic, and the ETH has been thoroughly verified in Ref. \cite{kim2014testing}. Furthermore, deep thermalization has also been investigated in this model with open boundary in Ref. \cite{cotler2023emergent} and \cite{bhore2023deep}, which does not respect the translation symmetry. Here, we explore this aspect for the Hamiltonian in Eq. (\ref{ising}) and contrast the results with those of the open boundary condition (OBC). To proceed, we consider a trivial product state $|\psi\rangle=|0\rangle^{\otimes N}$ as the initial state and evolve it under the  Hamiltonian. It is to be noted that the initial state is a common eigenstate of all the reflection operators (with eigenvalue $1$) and the translation operator (with eigenvalue $1$). Since the Hamiltonian commutes with $T$, the final state $|\psi(\tau)\rangle=e^{-i\tau H}|0\rangle^{\otimes N}$ will remain a common eigenstate of the translation and reflection operators with corresponding eigenvalues, where $\tau$ denotes the time of evolution. As this state evolves, we construct and examine its projected ensembles at various times. The computational basis is considered for the projective measurements on the subsystem-$B$. The corresponding numerical results are shown in Fig. \ref{fig:ising-phy}. 

Figures \ref{fig:ising-phy}a-\ref{fig:ising-phy}c demonstrate the decay of $\Delta^{(t)}$ as a function of evolution time ($\tau$) for the first three moments,  $t=1$, $2$, and $3$, respectively. We show this evolution for different system sizes $N$ by considering $N_A=3$ fixed and changing $N_B$.
The evolution of $\Delta^{(t)}(\tau)$ suggests a two-step relaxation towards the saturation. Initially, over a short period, the trace distance $\Delta^{(t)}(\tau)$ scales like $\sim \tau^{-2.2}$ for all three moments. For the largest considered system size, $N=18$, this power law behavior spans across the region $1\lesssim \tau\lesssim 4$. Moreover, this time scale appears to grow with $N_B$, which can be read off from the plots. Note that in Ref. \cite{cotler2023emergent}, the initial decay has been observed to be $\sim \tau^{-1.2}$ for the model with OBC [see also Appendix \ref{symbrok_comparision}]. In contrast, the present case exhibits a nearly doubled exponent of the power law behavior. Interestingly, similar behavior characterized by exponential decay has been observed in Ref. \cite{shrotriya2023nonlocality}, where the authors focused on dual unitary circuits and contrasted the case of PBCs with OBCs {while keeping the fields and interactions homogeneous}. The doubled exponent noticed in the current case can be attributed to the {periodicity as well as the homogeneity of interactions and fields in the Hamiltonian and, in turn, the translation symmetry.} {To elucidate it further, in Appendix \ref{symbrok_comparision}, we contrast these dynamics with the symmetry-broken cases obtained through the modification of boundary conditions and introduction of disorders}. In the present case, the entanglement across the bi-partition $AB$ grows at a rate twice the rate in the case of OBC \cite{mishra2015protocol, pal2018entangling}. To elucidate the role of entanglement growth at initial times, we examine the trace distance for the simplest case $t=1$, i.e., $\Delta^{(1)}(\tau)$. Since the first moment of the projected ensemble is simply the reduced density matrix of $A$, the trace distance can be written as 
\begin{equation}
\Delta^{(1)}(\tau)=\left\|\rho_{A}(\tau)-\dfrac{\mathbb{I}}{2^{N_A}}\right\|_{1}
=\sum_{j=0}^{2^{N_A}-1}\left| \gamma^{2}_{j}(\tau)-\dfrac{1}{2^{N_A}} \right|,    
\end{equation}
where $\rho_{A}(t)=\Tr_{B}(|\psi(t)\rangle\langle\psi(t)|)$ and $\{\gamma_{j}\}$ denote the Schmidt coefficients of $|\psi(t)\rangle$ across the bipartition. Above expression relates $\Delta^{(1)}(\tau)$ to the fluctuations of the Schmidt coefficients around $1/2^{N_A}$, corresponding to the maximally mixed value. At the time $\tau=0$, as the considered initial state is a product state, the only non-zero Schmidt coefficient is $\gamma_0(\tau=0)=1$. Hence, it is fair to say that $\Delta^{(1)}(0)$ is largely dominated by the decay of $\gamma_{0}$ during the early times. This regime usually witnesses inter-subsystem scrambling of the initial state mediated by the entanglement growth.  
Moreover, the initial decay appears across all three panels with the same power law scaling, indicating a similar early-time dependence of $\Delta^{t}(\tau)$ over $\gamma_{0}(\tau)$.

Beyond the initial power-law regime, we observe a crossover to an intermediate-time power-law decay regime with a smaller exponent, which is followed by saturation at a large time. The power-law exponent in the intermediate decay regime depends on the system size $N$, yielding a non-universal characteristic. For $N=18$, we obtain $\Delta^{(t)}\sim t^{-1.2}$, which is to be contrasted with the early-time decay exponent. At the onset of this scaling, the largest Schmidt coefficient $\gamma_0$ becomes comparable with the other $\gamma_j$s. The corresponding timescale is referred to as collision time \cite{vznidarivc2012subsystem}. At the collision time, $\gamma_0$ comes close to $\gamma_1$, the second largest Schmidt-coefficient. Hence, $\gamma_0$ does not solely determine the decay of $\Delta^{(1)}(\tau)$. The two-step relaxation of quantum systems has been recently studied in systems with two or more symmetries and also in quantum circuit models \cite{bensa2022two, vznidarivc2023two}. Finally, we benchmark the late time saturation values for each $N$ using the random matrix theory (RMT) predictions for the appropriate ensembles [see Appendix \ref{benchmark_latetim}]. We do this by plotting horizontal (dashed) lines corresponding to the RMT values. We notice that the saturation matches well with the corresponding RMT predictions. The same is also illustrated in the insets of Fig. \ref{fig:ising-phy}, where the long-time averages of $\Delta^{t}(\tau)$ for different system sizes are denoted with dots. Whereas the RMT values are shown with the marker-$\cross$. It is evident from the insets that the saturation values and the RMT values nearly coincide. On the other hand, it is interesting to note for the random translation symmetric states with no other symmetries present, the average trace distance $\overline{\Delta^{(t)}}$ deviates slightly from the former case. However, in the case of higher-order moments, these differences appear to become smaller. We intuitively expect the two-step relaxation observed in the present case to arise from its two competing features: initial faster decay and late time saturation above random matrix prediction.

\section{Summary and Discussion}\label{discussion}
In summary, we have investigated the role of symmetries on the choice of measurement basis for quantum state designs within the projected ensemble framework. By employing the tools from Lie groups and measure theory, we have evaluated the higher-order moments of the symmetry-restricted ensembles. Using these, we have derived a sufficient condition on the measurement basis for the emergence of higher-order state designs. The condition reads as follows: Given an arbitrary measurement basis $\mathcal{B}\equiv \{|b\rangle\}$ over a subsystem-$B$, for a typical $Q$-symmetric state $|\psi_{AB}\rangle\in\mathcal{E}^{k}_{\text{Q}}$ with a charge $k$, $\langle b|\mathbf{Q}_{k}|b\rangle=\mathbb{I}_{2^{N_A}}$ for all $|b\rangle\in\mathcal{B}$ implies that the projected ensembles approximate higher-order state designs. Moreover, the approximation improves exponentially with $N_B$, the bath size. While the condition is sufficient for the emergence of state designs, the necessity of it remains an open question. We demonstrate its versatility by considering measurement bases violating the condition mildly. Our analysis further suggests that a significant violation of the condition likely prevents the convergence of projected ensembles to the designs even in the limit of large $N_B$. To elucidate it, we have quantified the extent to which a basis violates the sufficient condition using the quantity $\sum_{|b\rangle\in\mathcal{B}}\|\langle b|\mathbf{Q}_{k}|b\rangle -\mathbb{I}_{2^{N_A}}\|_{1}/2^{N_B}$. This quantity allows us to identify the bases that violate the condition significantly. We have shown that the measurements in these bases result in a finite value for the trace distance $\Delta^{(t)}$ even when $N_B$ is large. Surprisingly, these include bases that do not adhere to the symmetry in the generator states.

To begin with, we have chosen random T-invariant states as the generator states. In constructing these states, we projected the Haar random states onto the momentum-conserving subspaces to reconcile both randomness and symmetry. This allows one to construct distinct ensembles of T-invariant states, each with a different momentum, and the states in the ensembles are uniformly distributed. Given a suitable measurement basis, Levy's lemma then ensures that the projected ensemble of a typical T-invariant state well approximates a state design. Equipped with this argument, we have numerically verified the emergence of designs for different measurement bases. These bases include the standard computational ($\sigma^z$) basis and the eigenbasis of $T_B$. While the former nearly satisfies the sufficient condition, the latter violates it significantly. Accordingly, the trace distance $\Delta^{(t)}$ decays exponentially with $N_B$ for the computational basis. Whereas, for the eigenbasis of $T_B$, $\Delta^{(t)}$ converges to a non-zero value. To further contextualize our results in a more physical setting, we have focused on deep thermalization in a tilted field Ising chain with PBCs, respecting the translation symmetry. The results indicate that the decay of the trace distance with time occurs in two steps. The initial decay is observed to be twice the rate of the case of the same model with OBCs. In the intermediate time, the decay trend is a system-dependent power law. Whereas, in a long time, the trace distance saturates to a value slightly larger than RMT prediction, a reminiscence of other symmetries.

Due to the generality of our formalism, the results can be extended to other discrete symmetries and are expected to hold for continuous symmetries as well. In particular, generalization to other cyclic groups is straightforward. To illustrate this, we have examined the projected ensembles from the generator states with $Z_2$ and reflection symmetries. A crucial implication of our results is that the sufficient condition plays a pivotal role in identifying appropriate measurement bases, even when their suitability is not immediately apparent.

If one considers two or more non-commuting symmetries, they do not share common eigenstates. In such systems, the equilibrium states have been shown to approximate non-abelian thermal states \cite{majidy2023noncommuting, yunger2016microcanonical}. These states have been experimentally realized recently in Ref. \cite{kranzl2023experimental}. Hence, an extensive study of deep thermalization and emergent state designs in these systems is a topic of our immediate future investigation. Additionally, measurement-induced phase transitions (MIPTs) occur due to an interplay between the measurements and the dynamics in many-body chaotic systems \cite{skinner2019measurement}. Our results can offer insights into the mechanism of the MIPTs whenever the dynamics and the measurements are chosen to respect symmetries \cite{majidy2023critical}.

\begin{acknowledgments}
We gratefully acknowledge useful discussions with Vaibhav Madhok, Arul Lakshminarayan, Philipp Hauke, and N. Ramadas.  We thank Andrea Legramandi for reading the manuscript and providing useful suggestions. N.D.V.\ acknowledges funding from the Department of Science and Technology, Govt of India, under Grant No. DST/ICPS/QusT/Theme-3/2019/Q69, and partial support by a grant from Mphasis to the Centre for Quantum Information, Communication, and Computing (CQuICC) at
IIT Madras.
S.B.\ acknowledges funding from the European Union under NextGenerationEU Prot. n. 2022ATM8FY (CUP: E53D23002240006), European Research Council (ERC) under the European Union’s Horizon 2020 research and innovation programme (grant agreement No 804305), Provincia Autonoma di Trento, Q$@$TN, the joint lab between University of Trento, FBK-Fondazione Bruno Kessler, INFN-National Institute for Nuclear Physics and CNR-National Research Council.
Views and opinions expressed are however those of the author(s) only and do not necessarily reflect those of the European Union or European Commission. Neither the European Union nor the granting authority can be held responsible for them.
S.B.\ acknowledges CINECA for the use of HPC resources under ISCRA-C projects ISSYK-2 (HP10CP8XXF) and DISYK (HP10CGNZG9).

\end{acknowledgments}

\onecolumngrid
\appendix

\section{Details on the construction of random T-invariant unitaries}
\label{polar}
The QR decomposition is traditionally used to generate Haar random unitary operators from the initial random Gaussian matrices. However, QR decomposition can not produce uniformly distributed unitaries from the subgroups such as $U_{\text{TI}}(d^N)$ as the decomposition does not preserve the symmetries of the initial operator. Here, we use polar decomposition as an alternative to the QR decomposition to generate random unitary operators. For a given initial operator $Z$, the polar decomposition is given by $Z=UP$, where $P$ is a positive semi-definite operator, $P=\sqrt{Z^{\dagger}Z}$. If $Z$ is a full rank matrix, $U=Z(Z^{\dagger}Z)^{-1/2}$ can be  uniquely computed. If $Z$ is a complex Gaussian matrix with mean $\mu=0$ and standard deviation $\sigma=1$, the polar decomposition will yield the ensemble of unitaries whose moments match those of the Haar random unitaries. To see this, consider $A$, an arbitrary operator acting on $t$-replicas of the same Hilbert space $\mathcal{H}^{d}$. Then, we have 
\begin{eqnarray}
\langle U^{\dagger\otimes t}AU^{\otimes t}\rangle &=& \int_{Z}d\mu(Z) \left(Z(Z^{\dagger}Z)^{-1/2}\right)^{\dagger\otimes t}A\left(Z(Z^{\dagger}Z)^{-1/2}\right)^{\otimes t}\nonumber\\
&=&\int_{Z}d\mu(Z) \left((Z^{\dagger}Z)^{-1/2} Z^{\dagger}\right)^{\otimes t}A\left(Z(Z^{\dagger}Z)^{-1/2}\right)^{\otimes t},  
\end{eqnarray}
where $d\mu(Z)$ denotes the invariant measure over the Ginebre ensemble. 
Since the Ginibre ensemble is unitarily invariant, we replace $Z$ with $VZ$ for some $V\in U(d^N)$ and perform Haar integral over $V$. This action keeps the overall integral in the above equation invariant. 
\begin{eqnarray}
\langle U^{\dagger\otimes t}AU^{\otimes t}\rangle &=&\int_{Z}d\mu(Z)\int_{V\in U(d^N)}d\mu(V) \left((Z^{\dagger}Z)^{-1/2} Z^{\dagger}V^{\dagger}\right)^{\otimes t}A\left(VZ(Z^{\dagger}Z)^{-1/2}\right)^{\otimes t}\nonumber\\
&=&\int_{Z}d\mu(Z)\left((Z^{\dagger}Z)^{-1/2} Z^{\dagger}\right)^{\otimes t}\left(\int_{V\in U(d^N)}d\mu(V)V^{\dagger\otimes t}AV^{\otimes t}\right)\left(Z(Z^{\dagger}Z)^{-1/2}\right)^{\otimes t}. 
\end{eqnarray}
By the Schur-Weyl duality, $\int_{V\in U(d^N)}d\mu(V)V^{\dagger\otimes t}AV^{\otimes t}=\sum_{i=1}^{t!}c_i \pi_i$, where $\{\pi_i\}$s are permutation operators acting on $t$-replicas of the Hilbert space. It then follows that 
\begin{eqnarray}
\langle U^{\dagger\otimes t}AU^{\otimes t}\rangle =\left(\int_{V\in U(d^N)}d\mu(V)V^{\dagger\otimes t}AV^{\otimes t}\right)\left(\int_{Z}d\mu(Z)\left((Z^{\dagger}Z)^{-1/2} Z^{\dagger}\right)^{\otimes t}\left(Z(Z^{\dagger}Z)^{-1/2}\right)^{\otimes t}\right).  
\end{eqnarray}
Since $Z(Z^{\dagger}Z)^{-1/2}=U$ is a unitary operator, the integrand of the second integral becomes the Identity operator. Therefore, 
\begin{eqnarray}
\langle U^{\dagger\otimes t}AU^{\otimes t}\rangle= \int_{V\in U(d^N)}d\mu(V)V^{\dagger\otimes t}AV^{\otimes t}. 
\end{eqnarray}
This equation implies that the moments of the ensemble of unitaries from the polar decomposition are identical to those of the Haar ensemble of unitaries. 

We now show that if the initial operator commutes with an arbitrary unitary operator, then the resulting unitary from the polar decomposition necessarily commutes with the same. For our purpose, we take the commuting unitary to be $T$, the translation operator. Let $Z$ be randomly drawn from the Ginibre ensemble. Then, the operator $Z'=\sum_{j=0}^{N-1}T^{\dagger j}ZT^{j}$ is translation invariant as $T^{\dagger}Z'T=Z'$. Moreover, $P'=\sqrt{Z^{'\dagger}Z'}$ is also $T$-invariant whenever $Z'$ is a full rank matrix. Consequently, the resulting unitary $U'$ commutes with $T$. One can also show that the distribution of $Z'$ is invariant under the action of elements of $U_{\text{TI}}(d^N)$. Therefore, the resulting ensemble of unitaries has the same moments as those of the unitary subgroup $U_{\text{TI}}(d^N)$.

\section{Proof of Result \ref{Tran_designs}}\label{trans_designs}
\begin{proof}
We first note that given a Haar random pure state $|\psi\rangle$, under the map $\mathbb{T}_{k}$, becomes an eigenstate of $T$ with the eigenvalue $e^{-2\pi ik/N}$, i.e., $|\phi\rangle =\mathbf{T}_{k}|\psi\rangle/\sqrt{\langle\psi|\mathbf{T}^{\dagger}_{k}\mathbf{T}_{k}|\psi\rangle}$. This generates an ensemble $\{|\phi\rangle \}$, denoted with $\mathcal{E}^{k}_{\text{TI}}$, when $|\psi\rangle\in\mathcal{E}_{\text{Haar}}$. We are interested in finding the moments associated with $\mathcal{E}^{k}_{\text{TI}}$. For any $k$, the $t$-th moment can be evaluated as follows: 
\begin{eqnarray}
\mathbb{E}_{\phi\in\mathcal{E}^{k}}\left[\left[|\phi\rangle\langle\phi|\right]^{\otimes t}\right]=\int_{\psi\in\mathcal{E}_{\text{Haar}}}d\psi\dfrac{\mathbf{T}^{\otimes t}_{k}\left[|\psi\rangle\langle\psi|\right]^{\otimes t}\mathbf{T}^{\dagger\otimes t}_{k}}{\langle|\psi|\mathbf{T}^{\dagger}_{k}\mathbf{T}_{k}|\psi\rangle^{t}},
\end{eqnarray}
where the integral on the right-hand side is performed over the Haar random pure states. Since the Haar random states can be generated through the action of Haar random unitaries on a fixed fiducial quantum state, we can write
\begin{eqnarray}
\mathbb{E}_{\phi\in\mathcal{E}^{k}}\left[\left[|\phi\rangle\langle\phi|\right]^{\otimes t}\right]=\int_{u\in U(d^N)}d\mu(u)\dfrac{\mathbf{T}^{\otimes t}_{k}\left[u|0\rangle\langle 0|u^{\dagger}\right]^{\otimes t}\mathbf{T}^{\dagger\otimes t}_{k}}{\langle 0|u^{\dagger}\mathbf{T}^{\dagger}_{k}\mathbf{T}_{k}u|0\rangle^{t}}.
\end{eqnarray}
Since the integrand in the above equation has $u$-dependence in both the numerator and the denominator, direct evaluation of the Haar integral is challenging. To circumvent it, let us now consider the following ensemble average:
\begin{eqnarray}\label{A3}
\mathbb{E}\left[\left[|\phi\rangle\langle\phi|\right]^{\otimes t} \langle\psi|\mathbf{T}^{\dagger}_{k}\mathbf{T}_{k}|\psi\rangle^{t}\right]=\int_{u\in U(d^N)}d\mu(u)\dfrac{\mathbf{T}^{\otimes t}_{k}\left[u|0\rangle\langle 0|u^{\dagger}\right]^{\otimes t}\mathbf{T}^{\dagger\otimes t}_{k}}{\langle 0|u^{\dagger}\mathbf{T}^{\dagger}_{k}\mathbf{T}_{k}u|0\rangle^{t}}\langle 0|u^{\dagger}\mathbf{T}^{\dagger}_{k}\mathbf{T}_{k}u|0\rangle^{t}.\end{eqnarray}
By taking advantage of the left and the right invariance of the Haar measure over the unitary group $U(d^N)$, we replace $u$ in the above equation with $vu$, where $v\in U_{\text{TI}}(d^N)\subset U(d^N)$. Under this action, the term $\langle 0|u^{\dagger}\mathbf{T}^{\dagger}_{k}\mathbf{T}_{k}u|0\rangle^{t}$ remains independent of $v$ as $[v, \mathbf{T}_{k}]=0$ for all $v\in U_{\text{TI}}(d^N)$ and $k$.
We then perform the Haar integration over $U_{\text{TI}}(d^N)$, which corresponds to the following:
\begin{align}\label{uni_inv}
\mathbb{E}\left[\left[|\phi\rangle\langle\phi|\right]^{\otimes t} \langle\psi|\mathbf{T}^{\dagger}_{k}\mathbf{T}_{k}|\psi\rangle^{t}\right]
&=\int_{u\in U(d^N)}d\mu(u)\langle 0|u^{\dagger}\mathbf{T}^{\dagger}_{k}\mathbf{T}_{k}u|0\rangle^{t}
\underbrace{\int_{v\in U_{\text{TI}}(d^N)}d\mu_{\text{TI}}(v)\dfrac{\mathbf{T}^{\otimes t}_{k}\left[vu|0\rangle\langle 0|u^{\dagger}v^{\dagger}\right]^{\otimes t}\mathbf{T}^{\dagger\otimes t}_{k}}{\langle 0|u^{\dagger}\mathbf{T}^{\dagger}_{k}\mathbf{T}_{k}u|0\rangle^{t}}}_{=\mathbb{E}_{\phi\in\mathcal{E}^{k}_{\text{TI}}}\left[\left[|\phi\rangle\langle\phi|\right]^{\otimes t}\right]}\nonumber\\
&=\mathbb{E}_{\phi\in\mathcal{E}^{k}_{\text{TI}}}\left[\left[|\phi\rangle\langle\phi|\right]^{\otimes t}\right] \int_{u\in U(d^N)}d\mu(u)\langle 0| u^{\dagger}\mathbf{T}^{\dagger}_{k}\mathbf{T}_{k}u|0\rangle^{t}\nonumber\\
&=\mathbb{E}_{\phi\in\mathcal{E}^{k}_{\text{TI}}}\left[\left[|\phi\rangle\langle\phi|\right]^{\otimes t}\right] \int_{|\psi\rangle\in \mathcal{E}_{\text{Haar}}}d\psi
\langle\psi|\mathbf{T}^{\dagger}_{k}\mathbf{T}_{k}|\psi\rangle^{t}.
\end{align}
where $d\mu_{\text{TI}}$ denotes the Haar measure over the subgroup $U_{\text{TI}}(d^N)$. Since $v$ is uniformly random in $U_{\text{TI}}(d^N)$, $v|\phi\rangle$ is also uniformly random in $\mathcal{E}^{k}_{\text{TI}}$
for any $|\phi\rangle\in\mathcal{E}^{k}_{\text{TI}}$. Therefore, $\mathbb{E}_{|\phi\rangle\in\mathcal{E}^{k}_{\text{TI}}}\left[\left(|\phi\rangle\langle\phi|\right)^{\otimes t}\right]=E_{v\in U_{\text{TI}}(d^N)}\left[\left(v|\phi\rangle\langle\phi|v^{\dagger}\right)^{\otimes t}\right]$. This is substituted in the second equality above. Equation (\ref{uni_inv}) implies that $\left[|\phi\rangle\langle\phi|\right]^{\otimes t}$ and $\langle\psi|\mathbf{T}^{\dagger}_{k}\mathbf{T}_{k}|\psi\rangle^{t}$ are independent random variables. Then, combining Eq. (\ref{A3}) and (\ref{uni_inv}), we get
\begin{eqnarray}
\mathbb{E}_{\phi\in\mathcal{E}^{k}_{\text{TI}}}\left[\left[|\phi\rangle\langle\phi|\right]^{\otimes t}\right]&=&\dfrac{\int_{u\in U(d^N)}d\mu(u)\mathbf{T}^{\otimes t}_{k}\left[u|0\rangle\langle 0|u^{\dagger}\right]^{\otimes t}\mathbf{T}^{\dagger\otimes t}_{k} }{\int_{u\in U(d^N)}d\mu(u)\langle 0| u^{\dagger}\mathbf{T}^{\dagger}_{k}\mathbf{T}_{k}u|0\rangle^{t}} \nonumber\\
&=&\dfrac{\mathbf{T}^{\otimes t}_{k}\bm{\Pi}_{t}}{\Tr\left( \mathbf{T}^{\otimes t}_{k}\bm{\Pi}_{t} \right)},
\end{eqnarray}
implying the result. Our analysis does not require an explicit form for $\alpha^{t}_{k}=\Tr\left( \mathbf{T}^{\otimes t}_{k}\bm{\Pi}_{t} \right)$. Hence, we leave it unchanged.
\end{proof}

\subsection{Alternative proof}
{Here, we provide an alternative proof for the moments of the ensembles of translation invariant states. The proof relies on identifying the commutant of the moment operator. 
\begin{proof}
We intend to compute the moments of random ensembles of translation symmetric states given by
\begin{eqnarray}\label{comtnt}
 \mathbb{E}_{|\phi\rangle\in\mathcal{E}^{k}_{\text{TI}}}\left[\left[ |\phi\rangle\langle\phi| \right]^{\otimes t}\right]&=&\int_{u\in U_{\text{TI}}(2^N)}d\mu(u) \left[ u|\phi\rangle\langle\phi|u^{\dagger} \right]^{\otimes t}.
\end{eqnarray}
The next step is to identify that the $t$-th order commutant of the unitary subgroup $U_{\text{TI}}(2^N)$ is given by the following Cartesian product:
\begin{eqnarray}
 \text{Comm}\left(U_{\text{TI}}(2^N), t\right)=\left\{\{\mathcal{T}^j\}^{N}_{j=1}\right\}^{\otimes t}\cross \{\pi_{l}\}^{t!}_{l=1}   
\end{eqnarray}
where $\mathcal{T}=e^{2\pi ik/N}T$ acts on $\mathcal{H}^{\otimes N}$ and $\pi_{l}$s denote the permutation operators acting on $t$-replicas of $\mathcal{H}^{\otimes N}$. Given the commutant, the right-hand side of Eq. (\ref{comtnt}) can be written as the linear combination of the elements of the commutant. 
\begin{eqnarray}\label{expan}
\int_{u\in U_{\text{TI}}(2^N)}d\mu(u) \left[ u|\phi\rangle\langle\phi|u^{\dagger} \right]^{\otimes t}=\sum_{j_1, j_2, \cdots J_t=1}^{N}\sum_{l=1}^{t!} \alpha_{j_1, j_2, \cdots j_t, l} \left(\mathcal{T}^{j_1}\otimes\mathcal{T}^{j_2}\otimes\cdots \otimes \mathcal{T}^{j_t} \right)\pi_{l}  .
\end{eqnarray}
In the next step, we multiply the above expression with $\left(\mathcal{T}^{j_1}\otimes\mathcal{T}^{j_2}\otimes\cdots \otimes \mathcal{T}^{j_t} \right)\pi_{l}$ for some $j_1, j_2, \cdots, j_t$ and $l$. The term on the left-hand side remains unaffected by this action, i.e., 
\begin{eqnarray}
\left(\mathcal{T}^{j_1}\otimes\mathcal{T}^{j_2}\otimes\cdots \otimes \mathcal{T}^{j_t} \right)\pi_{l}\int_{u\in U_{\text{TI}}(2^N)}d\mu(u) \left[ u|\phi\rangle\langle\phi|u^{\dagger} \right]^{\otimes t}=\int_{u\in U_{\text{TI}}(2^N)}d\mu(u) \left[ u|\phi\rangle\langle\phi|u^{\dagger} \right]^{\otimes t} .   
\end{eqnarray}
Also, note that the right-hand side of Eq. (\ref{expan}) still remains as a linear combination of the elements of the commutant, but the coefficients $\alpha_{j_1, j_2, \cdots , j_{t}, l}$ are now permuted. The resultant expression can be made into a linear equation of the coefficients by taking traces on both sides. It is then straightforward to verify that all the coefficients turn out to be equal. It then follows that 
\begin{eqnarray}
\int_{u\in U_{\text{TI}}(2^N)}d\mu(u) \left[ u|\phi\rangle\langle\phi|u^{\dagger} \right]^{\otimes t}&=&\alpha\sum_{j_1, j_2, \cdots, j_t=1}^{N}\sum_{l=1}^{t!}  \left(\mathcal{T}^{j_1}\otimes\mathcal{T}^{j_2}\otimes\cdots \otimes \mathcal{T}^{j_t} \right)\pi_{l}\nonumber\\
&=&\alpha \left(\sum_{j=1}^{N}e^{2\pi ijk/N} T^j\right)^{\otimes t}\left(\sum_{l=1}^{t!}\pi_{l}  \right)\nonumber\\
&=&\alpha \mathbf{T}^{\otimes t}_{k}\bm{\Pi}_{t}. 
\end{eqnarray}
The normalization constant is given by $\alpha=1/\Tr\left( \mathbf{T}^{\otimes t}_{k}\bm{\Pi}_{t} \right)$. 
\end{proof}}

\section{Partial trace of $T^j$}
\label{ptrace}
This appendix shows that the particle trace of $T^j$ results in some permutation operator whenever $N_A\geq \gcd(N, j)$. We first consider $j=1$. Then, $\Tr_{B}(T)$ is still a translation operator, acting on the subsystem-$A$ as shown in the following:
\begin{eqnarray}
\Tr_{B}\left(T\right)&=&\sum_{b\in\{0, 1\}^{N_B}}\langle b|T|b\rangle    \nonumber\\
&=&\sum_{b\in\{0, 1\}^{N_B}}\sum_{a\in\{0, 1\}^{N_A}}\sum_{a'\in\{0, 1\}^{N_A}}\langle a_1...a_{N_A}b_1...b_{N_B}|T|a'_1...a'_{N_A}b_1...b_{N_B}\rangle |a_1...a_{N_A}\rangle\langle a'_1...a'_{N_A}|\nonumber\\
&=&\sum_{b\in\{0, 1\}^{N_B}}\sum_{a\in\{0, 1\}^{N_A}}\sum_{a'\in\{0, 1\}^{N_A}}\left(\langle a_1...a_{N_A}b_1...b_{N_B}|b_{N_B}a'_1...a'_{N_A}b_1...b_{N_B-1}\rangle\right) |a_1...a_{N_A}\rangle\langle a'_1...a'_{N_A}|\nonumber\\
&=&\sum_{b\in\{0, 1\}^{N_B}}\sum_{a\in\{0, 1\}^{N_A}}\sum_{a'\in\{0, 1\}^{N_A}} \delta_{a_1, b_{N_B}}\delta_{a_2, a'_1}\delta_{a_3, a'_2}...\delta_{a_{N_A}, a'_{N_A-1}}\delta_{b_1, a'_{N_A}}\delta_{b_2, b_1}...\delta_{b_{N_B}, b_{N_B-1}}|a_1...a_{N_A}\rangle\langle a'_1...a'_{N_A}|\nonumber\\
&=&\sum_{a\in\{0, 1\}^{N_A}}|a_1...a_{N_A}\rangle\langle a_2...a_{N_A}a_1|\nonumber\\
&=&T_A.
\end{eqnarray}
In the fourth equality, on the right-hand side, the product of Kronecker deltas results in the following chains of equalities:
\begin{align}
a_1=b_{N_B}=b_{N_B-1}=...=b_{2}=b_{1}=a'_{N_A} \quad\text{and}\quad a_{i}=a'_{i-1}\quad\text{for all }N\leq i\leq 2.
\end{align}
The first chain contains equalities of all the bits of the $b$-strings. Therefore, the summation over $b\in\{0, 1\}^{N_B}$ disappears. Besides, the sum involving $a'$ strings disappears due to the remaining equalities, finally leading to the translation operator on $A$. 

For any $j>1$, the partial trace of $T^j$ also forms the product of Kronecker deltas. Every equality chain starting with $a_i$ of the string $a$ must end with $a'_j$ of $a'$ for some $i, j \leq N{A}$. We denote this chain as $[a_i-a'_j]$. Constructing a sequence of distinct chains $[a_i-a'_j][a_j-a'_k]...[a_l-a'_i]$, where subscripts of the last and first elements of consecutive chains match, forms a complete cycle if it covers all bits of $a$, $a'$, and $b$. Then, for any $j>1$, we observe the following implications:
\begin{enumerate}[i]
    \item If $N_A < \gcd(N, j)$, chains starting with $a_i$s always end with $a'_i$s, preventing a complete cycle. However, there will be exactly $\gcd(N, j)$ number of equality chains, each forming an incomplete cycle. Since the endpoints of the chains share the same subscripts, the resulting operator is a constant multiple of $\mathbb{I}_{2^{N_A}}$. 
    
    \item The second possibility is that all the chains can be stacked together to form a full cycle, mapping all $a'_i$s to distinct $a_j$s. Consequently, the resulting operator becomes a permutation operator on subsystem-$A$. A complete cycle can only be formed if $N_A\geq \gcd(N, j)$ (See also Lemma 3.8 in Ref. \cite{sugimoto2023eigenstate}). 
\end{enumerate}       
For $N_A>1$, a full cycle will always form if $N$ assumes a prime number as $N_A\geq \gcd(N, j)=1$ for any $j$.

\section{Proof of Result \ref{sufficient}}\label{verification}
\begin{proof}
Here, we seek to obtain a sufficient condition for the emergence of state designs from randomly chosen T-invariant generator states from $\mathcal{E}^{k}_{\text{TI}}$. In particular, for a randomly chosen $|\phi_{AB}\rangle\in\mathcal{E}^{k}_{\text{TI}}$, we aim to establish a condition on the measurement basis $|\mathcal{B}\rangle\equiv\{|b\rangle\}$ for the following identity:
\begin{align}
\mathbb{E}_{|\phi_{AB}\rangle\in\mathcal{E}^{k}_{\text{TI}}} \left(\sum_{|b\rangle\in\mathcal{B}}\dfrac{\left[\langle b|\phi_{AB}\rangle\langle\phi_{AB}|b\rangle\right]^{\otimes t}}{\left(\langle\phi_{AB}|b\rangle\langle b|\phi_{AB}\rangle\right)^{t-1}}\right)=\dfrac{\bm{\Pi}^{A}_{t}}{d_A(d_A+1)...(d_A+t-1)},
\end{align}
where $d_A=2^{N_A}$, the total Hilbert space dimension of the susbsyetm $A$. Since the expectation ($\mathbb{E}_{|\phi\rangle\in\mathcal{E}^{k}_{\text{TI}}}$) commutes with the summation ($\sum_{|b\rangle\in\mathcal{B}}$), we write 
\begin{align}
\mathbb{E}_{|\phi_{AB}\rangle\in\mathcal{E}^{k}_{\text{TI}}} \left(\sum_{|b\rangle\in\mathcal{B}}\dfrac{\left[\langle b|\phi_{AB}\rangle\langle\phi_{AB}|b\rangle\right]^{\otimes t}}{\left(\langle\phi_{AB}|b\rangle\langle b|\phi_{AB}\rangle\right)^{t-1}}\right)=\sum_{|b\rangle\in\mathcal{B}}\mathbb{E}_{|\phi_{AB}\rangle\in\mathcal{E}^{k}_{\text{TI}}}\left( \dfrac{\left[\langle b|\phi_{AB}\rangle\langle\phi_{AB}|b\rangle\right]^{\otimes t}}{\left(\langle\phi_{AB}|b\rangle\langle b|\phi_{AB}\rangle\right)^{t-1}} \right) .   
\end{align}
We note that for any $|\phi_{AB}\rangle\in\mathcal{H}^{\otimes N}$ and any $|b\rangle\in\mathcal{H}^{\otimes N_B}$, the scalar quantity $\langle\phi_{AB}|b\rangle\langle b|\phi_{AB}\rangle$ is always less than or equal to $1$, i.e., $\langle\phi_{AB}|b\rangle\langle b|\phi_{AB}\rangle\leq 1$. Thus, by writing $(1-(1-\langle\phi_{AB}|b\rangle\langle b|\phi_{AB}\rangle))$ in the denominator, we make use of the infinite series expansion of $1/(1-x)^{t-1}$ to evaluate the above expression. It then follows that 
\begin{align}\label{B3}
\dfrac{\left(\langle b|\phi_{AB}\rangle\langle\phi_{AB}|b\rangle\right)^{\otimes t}}{\left(\langle\phi_{AB}|b\rangle\langle b|\phi_{AB}\rangle\right)^{t-1}}= \left(\langle b|\phi_{AB}\rangle\langle\phi_{AB}|b\rangle\right)^{\otimes t}\sum_{n=0}^{\infty}\binom{n+t-2}{t-2}\sum_{r=0}^n \binom{n}{r} (-1)^r \Tr\left[\left(\langle b|\phi_{AB}\rangle\langle\phi_{AB}|b\rangle\right)^{\otimes r}\right]
\end{align}
Note that the (unnormalized) state $\langle b|\phi_{AB}\rangle\langle\phi_{AB}|b\rangle$ has support solely over the subsystem $A$. For computational convenience, we write it as follows:
\begin{eqnarray}\label{unn}
\langle b|\phi_{AB}\rangle\langle\phi_{AB}|b\rangle &=&\left(\sum_{m_i=0}^{2^{N_A}-1}|m_i\rangle \langle m_i|\right)\left(\langle b|\phi_{AB}\rangle\langle\phi_{AB}|b\rangle\right)\left(\sum_{n_i=0}^{2^{N_A}-1}|n_i\rangle \langle n_i|\right)\nonumber\\
&=&\sum_{m_i=0}^{2^{N_A}-1}\sum_{n_i=0}^{2^{N_A}-1}|m_i\rangle\langle n_i| \Tr\left[ |n_i\rangle\langle m_i|\left(\langle b|\phi_{AB}\rangle\langle\phi_{AB}|b\rangle\right) \right],
\end{eqnarray}
where 
\begin{equation*}
\sum_{m_i=0}^{2^{N_A}-1}|m_i\rangle \langle m_i|=\sum_{n_i=0}^{2^{N_A}-1}|n_i\rangle \langle n_i|= \mathbb{I}_{2^{N_A}}.  
\end{equation*}
Incorporating Eq. (\ref{unn}) into Eq. (\ref{B3}) gives
\begin{align}
\dfrac{\left[\langle b|\phi_{AB}\rangle\langle\phi_{AB}|b\rangle\right]^{\otimes t}}{\left(\langle\phi_{AB}|b\rangle\langle b|\phi_{AB}\rangle\right)^{t-1}}&=\sum_{\substack{m_1, m_2, ..., m_t \\ n_1, n_2, ..., n_t}}|m_1m_2..., m_t\rangle\langle n_1, n_2, ...n_t|\sum_{n=0}^{\infty}\binom{n+t-2}{t-2}\sum_{r=0}^n \binom{n}{r} (-1)^r \nonumber\\
&\hspace{3cm}\Tr\left[|n_1, n_2, ..., n_t\rangle\langle m_1, m_2, ..., m_t|\left(\langle b|\phi_{AB}\rangle\langle\phi_{AB}|b\rangle\right)^{\otimes (t+r)}\right],
\end{align}
In this expression, all the replicas of $\langle b|\phi_{AB}\rangle\langle\phi_{AB}|b\rangle$ are stacked together within the trace operation. This allows us to perform the invariant integration over the states $|\phi_{AB}\rangle\in\mathcal{E}^{k}_{\text{TI}}$, which is evaluated as 
\begin{eqnarray}\label{B6}
\mathbb{E}_{|\phi_{AB}\rangle\in\mathcal{E}^{k}_{\text{TI}}}\left[\left(\langle b|\phi_{AB}\rangle\langle\phi_{AB}|b\rangle\right)^{\otimes (t+r)}\right]=\dfrac{\langle b|\mathbf{T}_{k}|b\rangle^{\otimes (t+r)}\bm{\Pi}^{A}_{t+r}}{\Tr(\mathbf{T}^{\otimes (t+r)}_{k}\bm{\Pi}^{AB}_{t+r})},    
\end{eqnarray}
where $\bm{\Pi}^{A}_{t+r}$ and $\bm{\Pi}^{AB}_{t+r}$ denote projectors onto the permutation symmetric subspaces of $t+r$ copies. While the former acts only on the replicas of the subsystem $A$, the latter acts on the replicas of the entire system $AB$. It is now useful to write $\bm{\Pi}^{A}_{t+r}$ as follows:
\begin{eqnarray}
\bm{\Pi}^{A}_{t+r}=\mathcal{D}_{A, t+r}\int_{|\psi\rangle\in\mathcal{E}_{\text{Haar}}}d\psi \left(|\psi\rangle\langle\psi |\right)^{\otimes (t+r)},\quad\text{where }|\psi\rangle\in\mathcal{H}^{\otimes N_A}
\end{eqnarray}
Where, $\mathcal{D}_{A, t+r}=d_A(d_A+1)...(d_A+t+r-1)$ and $d_A=2^{N_A}$. It then follows that 
\begin{align}\label{sing}
\sum_{|b\rangle\in\mathcal{B}}\mathbb{E}_{\phi_{AB}\in\mathcal{E}^{k}}\left( \dfrac{\left[\langle b|\phi_{AB}\rangle\langle\phi_{AB}|b\rangle\right]^{\otimes t}}{\left(\langle\phi_{AB}|b\rangle\langle b|\phi_{AB}\rangle\right)^{t-1}} \right)=&
\sum_{|b\rangle\in\mathcal{B}}\langle b|\mathbf{T}_{k}|b\rangle^{\otimes t}\int_{|\psi\rangle\in\mathcal{E}_{\text{Haar}}}d\psi \left(|\psi\rangle\langle\psi |\right)^{\otimes t} \sum_{n=0}^{\infty}\binom{n+t-2}{t-2}\nonumber\\
&\hspace{2cm}\sum_{r=0}^n \binom{n}{r} (-1)^r \mathcal{D}_{A, t+r}\dfrac{\langle\psi b|\mathbf{T}_{k}|\psi b\rangle^{r}}{\Tr(\mathbf{T}^{\otimes (t+r)}_{k}\bm{\Pi}^{AB}_{t+r})}.
\end{align}
{The above expression characterizes the moments of the projected ensembles and, thus, the underlying distribution of the projected ensembles.}
A sufficient condition, $\langle b|\mathbf{T}_{k}|b\rangle=\mathbb{I}_{2^{N_A}}$ for all $|b\rangle\in\mathcal{B}$, ensures the convergence of the right-hand side of the above expression to the Haar moments. If satisfied, for all $|b\rangle\in\mathcal{B}$, we will have $\langle \psi b|\mathbf{T}_{k}|\psi b\rangle=1$. As a result, both the integral and the integrand can be decoupled from the infinite series. Then, the infinite series can be understood as the normalizing factor, which necessarily converges to $1/2^{N_B}$. Therefore, we get
\begin{align}\label{suf}
\mathbb{E}_{|\phi_{AB}\rangle\in\mathcal{E}^{k}}\left(\sum_{|b\rangle\in\mathcal{B}} \dfrac{\left[\langle b|\phi_{AB}\rangle\langle\phi_{AB}|b\rangle\right]^{\otimes t}}{\left(\langle\phi_{AB}|b\rangle\langle b|\phi_{AB}\rangle\right)^{t-1}} \right)= \int_{|\psi\rangle\in\mathcal{E}^{A}_{\text{Haar}}}d\psi \left(|\psi\rangle\langle\psi |\right)^{\otimes t}=\dfrac{\bm{\Pi^{A}_{t}}}{2^{N_A}(2^{N_A}+1)...(2^{N_A}+t-1)},
\end{align}
implying that the moments of the projected ensembles, on average, converge towards the Haar moments. 
\end{proof}

It's often challenging to find a basis fully satisfying the condition. The approximate state designs can still be obtained even when the given basis moderately violates the condition. In the computational basis where $b\in\equiv\{0, 1\}^{N_B}$, the number of violations are exponentially suppressed in $N_B$. To further elucidate, we examine $\langle b|\mathbf{T}_{k}|b\rangle$:
\begin{eqnarray}
\langle b|\mathbf{T}_{k}|b\rangle=\mathbb{I}_{2^{N_A}}+\sum_{j=1}^{N-1}e^{2\pi ijk/N}\langle b|T^{j}|b\rangle , 
\end{eqnarray}
where the operators $\langle b|T^j|b\rangle$ for all $j\geq 1$ are sparse matrices with the elements either being zeros or ones. To quantify the violation, in the main text, we studied the quantity $\bm{\Delta}(\mathbf{T}_{k}, \mathcal{B})/2^N$. We found that the violation decays exponentially with $N_B$, where the exponent depends on the particular basis under consideration.

\section{Applicability of Levy's lemma for the moments of the projected ensembles with typical generator states}
{Equation (\ref{suf}) tells us that the $t$-th moment operator of a projected ensemble, when averaged over many random symmetric generator states, equates to the $t$-th Haar moment whenever the sufficient condition holds. Then, Levy's lemma can be used to argue that the moment operator for a typical generator state also approximates the Haar moments, and the distance between them shrinks exponentially with the Hilbert space dimension. In this appendix, we provide some basic details concerning Levy's lemma and its applicability in the context of the projected ensembles.} 

{\textbf{Definition} (Lipschitz continuous functions). A function $f: X\rightarrow Y$ is Lipschitz continuous with Lipschitz constant $\eta$, if for any $x_1, x_2\in X$, it holds that 
\begin{eqnarray}
d_y(f(x_1), f(x_2))\leq \eta d_x(x_1, x_2),     
\end{eqnarray}
where $d_x$ and $d_y$ indicate the distance metrics associated with the spaces $X$ and $Y$,  respectively. The Lipschitz continuity is a stronger form of the uniform continuity of $f$ \cite{o2006metric}, and $\eta$ upper bounds the slope of $f$ in $X$ \cite{milman1986asymptotic, ledoux2001concentration}}. 

\textbf{Levy's lemma} \cite{milman1986asymptotic, ledoux2001concentration}
{Let $f: \mathbb{S}^{d-1} \rightarrow \mathbb{R}$ be a Lipschitz function defined over a $(d-1)$-sphere $\mathbb{S}^{d-1}$, equipped with a natural Haar measure. Suppose a point $x \in \mathbb{S}^{d-1}$ is drawn uniformly at random from $\mathbb{S}^{d-1}$. Then, for any $\varepsilon > 0$, the following concentration inequality holds:
\begin{eqnarray}
\text{Pr}\left[ \left|f(x)-\mathbb{E}_{x\in \mathbb{S}^{d-1}}(f(x))\right|\geq \varepsilon \right]\leq 2\exp\left\{ \dfrac{-d\varepsilon^2}{9\pi^3\eta^2} \right\}, 
\end{eqnarray}
where $\eta$ is the Lipschitz constant of $f$ and $c$ is a positive constant.}

{Proof of Levy's lemma can be found in Ref. \cite{gerken2013measure}. Levy's lemma guarantees that the value of a Lipschitz continuous function at a typical $x\in \mathbb{S}^{d-1}$ is always close to its mean value, as given by $\mathbb{E}_{x\in \mathbb{S}^{d-1}}(f(x))$. The difference between the mean and a typical value is exponentially suppressed with the Hilbert space dimension}.

{Let us now consider the moments of a projected ensemble for an arbitrary generator state ($|\phi\rangle$) with symmetry: 
\begin{eqnarray}\label{levylemma}
\mathcal{M}^{t}(|\phi\rangle)=\sum_{|b\rangle\in\mathcal{B}} \dfrac{\left[\langle b|\phi\rangle\langle\phi|b\rangle\right]^{\otimes t}}{\left(\langle\phi|b\rangle\langle b|\phi\rangle\right)^{t-1}} .   
\end{eqnarray}
For Haar random generator state, $[\mathcal{M}^{t}(|\phi\rangle)]_{ij}$, the elements of the moment operator, have been shown to be Lipschitz continuous functions of $|\phi\rangle$ with the Lipschitz constant $\eta\leq 2(2t-1)$. Then, the Levy's lemma as given in Eq. (\ref{levylemma}) implies that the differences between the elements of the projected ensemble moments and their respective means are exponentially suppressed with the total Hilbert space dimension $2^{N}$. A detailed explanation for the same can be found in Ref. \cite{cotler2023emergent}.
Since the ensembles of uniform random states with symmetry are subsets of the Haar ensemble, the corresponding $[\mathcal{M}^{t}(|\phi\rangle)]_{ij}$s remain Lipschitz continuous with the same Lipschitz constant $\eta$ (or smaller than $\eta$)}.

\section{Violation of the sufficient condition for $r=2$}
\label{vior2}
In this appendix, we examine Eq. (\ref{tria}) for $r=2$ and seek to identify the measurement bases that strongly violate the sufficient condition. In the main text, we have carried out the analysis for $r=1$ and we observed that the eigenbases of the operators $u_{N_A+1}T_B$ for all $u_{N_A+1}\in U(d)$ significantly violate the condition, where $T_B$ denotes the translation operator supported over $B$ and the subscript $N_A+1$ denotes that the unitary acts on the site labeled $N_A+1$. When $u_{N_A+1}=\mathbb{I}_{2}$, the operator is simply a translation operator over $B$. The eigenbasis of this operator is locally translation invariant. However, for a random $u_{N_A+1}$, the translation symmetry gets weakly broken. As $r$ is increased further, the local translation symmetry of the measurement basis gradually disappears. To see this for $r=2$, we consider the following equality: 
\begin{eqnarray}
\bm{\Delta}(\mathbf{T}_{k}, \mathcal{B})=\sum_{|b\rangle\in\mathcal{B}}\left(\left\|e^{2\pi ir/N} \langle b|T^{2}|b\rangle +\sum_{j\neq 2}e^{2\pi ijk/N} \langle b| T^j |b\rangle \right\|_{1}\right).  
\end{eqnarray}
We now write $T^2$ as 
\begin{eqnarray}
 T^2&=&TT\nonumber\\
 &=&\left(S_{12}S_{23}\cdots S_{N-1, N}\right)\left(S_{12}S_{23}\cdots S_{N-1, N}\right)
\end{eqnarray}
For simplicity, we take $N_A=3$. The partial expectation of $T^2$ with respect to a basis vector $|b\rangle\in\mathcal{B}$ can be written as 
\begin{eqnarray}
\langle b|T^2|b\rangle &=& S_{12}S_{23}S_{12}\langle b| S_{34}S_{23}\left( S_{45}S_{34}S_{56}\cdots S_{N-1, N} \right)\left( S_{45}S_{56}\cdots S_{N-1, N} \right) |b\rangle
\end{eqnarray}
We now substitute the integral expression of the swap operators corresponding to $S_{34}$ and $S_{23}$. It then follows that 
\begin{eqnarray}
\langle b|T^2|b\rangle&=&S_{13}\int_{u\in U(d)}d\mu(u)\int_{v\in U(d)} d\mu(v)\int_{w\in U(d)}d\mu(w)u_{2}v_{3}u_{3}w_{3}\langle b| v_{4} \left( S_{45}w_{4}S_{56}\cdots S_{N-1, N} \right)\left( S_{45}S_{56}\cdots S_{N-1, N} \right)|b\rangle \nonumber\\
&=&S_{13}\int_{u\in U(d)}d\mu(u)\int_{v\in U(d)} d\mu(v)\int_{w\in U(d)}d\mu(w)u_{2}v_{3}u_{3}w_{3}\langle b| v_{4}T_Bw_{4}T_{B} |b\rangle
\end{eqnarray}
If the measurement basis is the eigenbasis of the operator $( v_4T_Bw_4T_B)$ for some $v$ and $w$ being local Haar random unitaries, one can expect that $\sum_{|b\rangle\mathcal{B}}\|\langle b| T^2 |b\rangle\|_{1}\sim O(2^{N_B})$. For $v=w=\mathbb{I}_{2}$, the above operator becomes $T^2_B$. The eigenbasis of this operator is invariant under translations by two sites. Likewise, one can show that for an arbitrary integer $r$, the eigenbasis of $T^r$ strongly violates the condition. Figures \ref{r2r3}a-\ref{r2r3}c demonstrate the decay of trace distance by considering the measurements in the eigenbases of $T^2_{B}$ and $T^3_{B}$ operators. From the figure, it is evident that the design formation is obstructed. This suggests that the sufficient condition we derived could potentially be necessary as well for the emergence of higher-order state designs.  

\begin{figure*}
\includegraphics[scale=0.5]{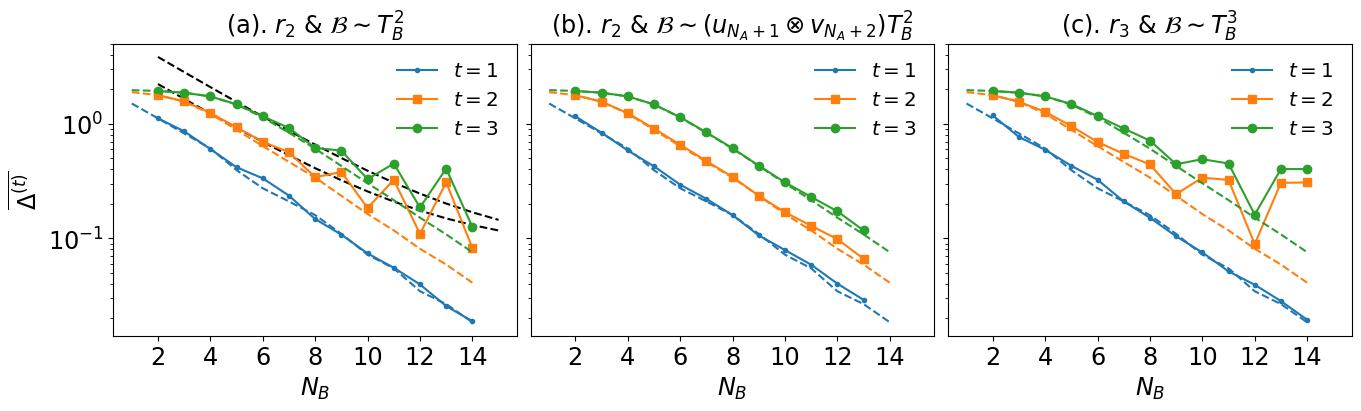}
\caption{\label{r2r3} Illustration of $\overline{\Delta^{(t)}}$ versus $N_B$ for the first three moments when the measurements are performed in bases largely violating the sufficient condition. {Here, we fix $N_A=3$}. In panels (a) and (b), the measurement bases are the eigenbases of $T^2_{B}$ and $(u_{N_A+1}\otimes v_{N_A+1})T^2_{B}$, where $T_B$ denotes the local translation operator over the subsystem-$B$. We observe that in the former case, the trace distances for higher order moments initially decay and acquire oscillatory behavior around exponential curves decaying to finite non-zero values for larger values of $N_B$. Due to the applications of local Haar random unitaries, the behavior in the latter case appears to decay for smaller $N_B$ values. However, for larger values of $N_B$, we anticipate that the trace distance approaches a finite non-zero value. Panel (c) corresponds to the measurements in the eigenbasis of $T^3_{B}$. {In all the above panels, the averages are taken over $10$ samples of the initial generator states from the respective state ensembles}.   
} 
\end{figure*}

\section{Characterization of the distribution of the projected ensembles }
\label{app-moments}
{In Appendix \ref{verification}, Eq. (\ref{sing}) provides an expression for the moments of the projected ensembles with respect to a given measurement basis, averaged over the initial generator states from the symmetry-restricted ensembles. These moments characterize the underlying distribution of the ensembles to some extent. The resultant distribution depends both on the initial symmetry and the measurement basis considered. When the sufficient condition is satisfied, the resultant distribution displays the moments of the Haar ensemble. It's interesting to examine the distribution of projected ensembles when the sufficient condition does not hold. In this appendix, we provide closed-form expressions for the moments of the projected ensembles. To illustrate, we take the generator states from $\mathbb{Z}_2$-symmetric ensembles and perform the measurements in the local product basis given by $\mathcal{B}\equiv\{+, -\}^{N_B}$. 
For the $\mathbb{Z}_2$-symmtric case, the resulting distribution of the projected ensembles from the measurements in $\sigma^x$ basis indeed displays the moments, which are linear combinations of the symmetry-restricted moments corresponding to both parities. In the following, we shall show this analytically. }

{To begin with, let us recall that the moments of the projected ensembles averaged over the ensemble of $\mathbb{Z}_2$-symmetric states are given by 
\begin{eqnarray}\label{moment}
\mathcal{M}^{t}_{\mathbb{Z}_{2}}=\mathbb{E}_{|\phi\rangle\in\mathcal{E}^{k}_{\mathbb{Z}_{2}}}\left(\sum_{|b\rangle\in\mathcal{B}}\dfrac{\left[\langle b|\phi\rangle\langle\phi|b\rangle\right]^{\otimes t}}{\left(\langle\phi|b\rangle\langle b|\phi\rangle\right)^{t-1}}\right) = \int_{u\in U_{\mathbb{Z}_2}(2^N)} d\mu(u) \left(\sum_{|b\rangle\in\mathcal{B}}\dfrac{\left[\langle b|u|\phi\rangle\langle \phi|u^{\dagger}|b\rangle\right]^{\otimes t}}{\left(\langle \phi|u^{\dagger}|b\rangle\langle b|u|\phi\rangle\right)^{t-1}}\right), 
\end{eqnarray}
where $U_{\mathbb{Z}_{2}}(2^N)\subset U(2^N)$ denotes the set of all unitaries that commute with the operator $\otimes_{i=1}^{N}\sigma^x_{i}$. One can easily show that $U_{\mathbb{Z}_{2}}(2^N)$ is a compact subgroup of $U(2^N)$.} 

\begin{figure}
    \centering
    \includegraphics[width=0.5\linewidth]{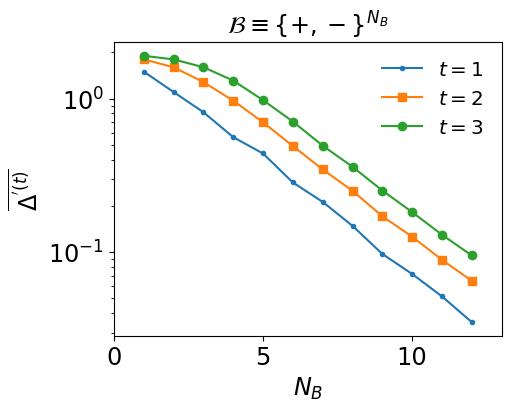}
    \caption{{Illustration of the average trace distance $\overline{\Delta^{'(t)}}$ vs $N_B$ for the random generator states with the $\mathbb{Z}_{2}$-symmetry for the first three moments. We fix the charge $k=0$ and $N_A=3$. The measurements are performed in local product basis $\mathcal{B}\equiv \{+, -\}^{N_B}$. For numerical purposes, the average trace distance is evaluated by considering $10$ samples of the initial generator states.}}
    \label{fig1}
\end{figure}

{Let $U_{\mathbb{Z}_{2}}(2^{N_A})$ denotes the set of all unitaries that commute with $\bigotimes_{i=1}^{N_A}\sigma^{x}_{i}$. Then, one can verify that the elements of $U_{\mathbb{Z}_{2}}(2^{N_A})$ also commute with the symmetry operator \( \bigotimes_{i=1}^{N} \sigma^x_i \). This will imply that $U_{\mathbb{Z}_{2}}(2^{N_{A}})\subset U_{\mathbb{Z}_{2}}(2^{N})$ \footnote{Note that for non-on-site symmetries, such as translation and reflection symmetries, this statement does not hold, so the following analysis cannot be extended straightforwardly.}. By making use of invariance of Haar measure associated with $U_{\mathbb{Z}_{2}}(2^{N})$, we replace $u$ in Eq. (\ref{moment}) with $vu$, where $v\in U_{\mathbb{Z}_{2}}(2^{N_{A}})$. It then follows that
\begin{eqnarray}
\mathcal{M}^{t}_{\mathbb{Z}_{2}}=\sum_{|b\rangle\in\mathcal{B}} \int_{u\in U_{\mathbb{Z}_2}(2^N)} d\mu(u)  \left(\dfrac{\left[\langle b|vu|\phi\rangle\langle \phi|u^{\dagger}v^{\dagger}|b\rangle\right]^{\otimes t}}{\left(\langle \phi|u^{\dagger}|b\rangle\langle b|u|\phi\rangle\right)^{t-1}}\right)=v^{\otimes t}\mathcal{M}^{t}_{\mathbb{Z}_{2}} v^{\dagger \otimes t}. 
\end{eqnarray}
The unitary freedom in the above equation can be used to show that the integrand and the denominator are independent random variables. This can be done by using arguments similar to those used in Appendix \ref{trans_designs}. It is then straightforward to write the moments of the projected ensemble as
\begin{eqnarray}\label{ind-meas}
\mathcal{M}^{t}_{\mathbb{Z}_{2}} \propto \sum_{|b\rangle\in\mathcal{B}}\langle b| \mathbf{Z}_{k} |b\rangle^{\otimes t}\mathbf{\Pi}^{t}_{A}, 
\end{eqnarray}
where $\mathbf{\Pi}^{t}_{A}=\sum_{j}\pi_{j}$ is the projector onto the permutation symmetric subspace of $t$-copies of the Hilbert space spanning $N_A$ sites. Note that Eq. (\ref{ind-meas}) holds for any measurement basis. We now fix the measurement basis to be $\mathcal{B}\equiv\{+, -\}^{N_B}$. For any $|b\rangle\in \mathcal{B}$, the partial inner product $\langle b| \mathbf{Z}_{k} |b\rangle$ can be evaluated as follows: 
\begin{eqnarray}
\langle b| \mathbf{Z}_{k} |b\rangle = \mathbb{I}_{2^{N_A}}+(-1)^{k+\sum_{i=1}^{N_B}\text{sgn}(b_i)}\left(\otimes_{i=1}^{N_A}\sigma^{x}_{i}\right)=\mathbf{Z}_{k', N_A},   
\end{eqnarray}
where $k'=k+\sum_{i=1}^{N_B}\text{sgn}(b_i)$. The subscript $N_A$ in $\mathbf{Z}_{k', N_A}$ distinguishes the subspace projector acting on all the sites ($\mathbf{Z}_{k}$) from the one that acts only on $N_A$-number of sites ($\mathbf{Z}_{k', N_A}$). Depending on the values taken by $k$ and the parity of the basis vector $|b\rangle$, r.h.s of the above expression becomes a subspace projector onto one of the eigenspaces of the $\mathbb{Z}_{2}$-symmetry operator. Moreover, one can verify that half of the basis vectors have even parity and the remaining half have odd parity. It then follows that 
\begin{eqnarray}
\mathcal{M}^{t}_{\mathbb{Z}_{2}}=\dfrac{1}{\mathcal{N}} \left( \mathbf{Z}^{\otimes t}_{0, N_A} + \mathbf{Z}^{\otimes t}_{1, N_A}\right) \mathbf{\Pi}^{t}_{A}, 
\end{eqnarray}
where $\mathcal{N}$ denotes the normalization constant and is given by $\mathcal{N}=\text{Tr}\left[ \left( \mathbf{Z}^{\otimes t}_{0, N_A} + \mathbf{Z}^{\otimes t}_{1, N_A}\right) \mathbf{\Pi}^{t}_{A}\right]$.}

{We now numerically calculate $\overline{\Delta^{'(t)}}$, the average trace distance between the moments of the projected ensembles of a typical $\mathbb{Z}_{2}$-symmetric generator state and $\mathcal{M}_{\mathbb{Z}^{t}_{2}}$, as the system size varies. 
\begin{eqnarray}
    \Delta'^{(t)}=\left\| \left(\sum_{|b\rangle\in\mathcal{B}}\dfrac{\left[\langle b|\phi_{AB}\rangle\langle\phi_{AB}|b\rangle\right]^{\otimes t}}{\left(\langle\phi_{AB}|b\rangle\langle b|\phi_{AB}\rangle\right)^{t-1}}\right) - \mathcal{M}^{t}_{\mathbb{Z}_{2}}\right\|_{1}.
\end{eqnarray}
We evaluate $\Delta^{'(t)}$ and average it over many samples of the initial generator states, which is denoted by $\overline{\Delta^{'t}}$. Supporting numerical results are shown in Fig. \ref{fig1}. We observe that the average trace distance $\overline{\Delta^{'t}}$ exponentially converges to zero as the measured subsystem size $N_B$ increases. }

\section{Transition in the randomness of the projected ensemble}
\label{Z2transition}
In the main text, while analyzing the emergence of state designs from $Z_2$, we have observed that the measurements in the eigenbasis of $\otimes_{N_B}\sigma^x$, i.e.,  ($\mathcal{B}\equiv\{|b\rangle\}$, $b\in\{+, -\}^{N_B}$) results in constant violation of the sufficient condition. However, changing measurements on a single arbitrary site to the $\sigma^z$ basis while keeping $\sigma^x$ measurements on other sites results in zero violation of the sufficient condition. Consequently, the trace distance $\overline{\Delta^{(t)}}$ converges exponentially to zero with the size of the system. In this appendix, we analyze the crossover from non-convergence to convergence in the projected ensembles towards the designs. In particular, we fix $\sigma^x$ measurement basis for $N_B-1$ sites and take the eigenbasis of $\alpha\sigma^x+(1-\alpha)\sigma^z$ for the local measurement over $N_B$-th site. As the parameter varies, we observe a transition in $\overline{\Delta^{(t)}}$ from a finite constant value toward a system size-dependent value. We show the corresponding results in Fig. \ref{fig:z2trans}. From the figure, we observe that near $\alpha=0$, the trace distance $\overline{\Delta^{(t)}}$ remains system size independent. In contrast, as $\alpha$ approaches $1$, the trace distance becomes sensitive to the system size $N$.

The violation of the sufficient condition, in this case, can be quantified as follows:
\begin{align}\label{G1}
\dfrac{\bm{\Delta}(\mathbf{Z}_{k}, \mathcal{B})}{2^{N_B}} &=\dfrac{1}{2^{N_B}}\sum_{b\in\mathcal{B}}\left\| \langle b|\mathbf{Z}_{k}|b\rangle -\mathbb{I}_{2^{N_A}} \right\|_{1}\nonumber\\
&=\dfrac{1}{2^{N_B}}\sum_{b_{1}\cdots b_{N_B-1}\in\{+, -\}^{N_B-1}}\sum_{b_{N_B}}\left\| (-1)^{k+\sum_{i=1}^{N_B-1}\text{sgn}(b_i)}\langle b_{N_B}| \sigma^x |b_{N_B}\rangle\otimes_{j=1}^{N_A} \sigma^{x}_{j} \right\|\nonumber\\
&=\dfrac{1}{2^{N_B}}\left\|\otimes_{j=1}^{N_A} \sigma^{x}_{j} \right\|\sum_{b_{1}\cdots b_{N_B-1}\in\{+, -\}^{N_B-1}}\sum_{b_{N_B}} |\langle b_{N_B}| \sigma^x |b_{N_B}\rangle|\nonumber\\
&=2^{N_A-1}\sum_{b_{N_B}} \left|\langle b_{N_B}| \sigma^x |b_{N_B}\rangle\right|,
\end{align}
where $\{|b_{N_B}\rangle\}$ denotes the eigenbasis of the operator $\alpha\sigma^z+(1-\alpha)\sigma^x$. From Eq. (\ref{G1}), we notice that the violation remains independent of the system size ($N$) and depends only on the parameter $\alpha$. Near $\alpha=0$, the violation stays nearly as constant ($\approx 1$) as depicted in Fig. \ref{fig:z2trans}c. Since $\dfrac{\bm{\Delta}(\mathbf{Z}_{k}, \mathcal{B})}{2^{N_B}}$ remains independent of $N$, the projected ensembles do not converge to the designs even in the limit of large $N$ when $\alpha$ is close to $0$. Hence, $\Delta^{(t)}$ remains nearly constant for all $N$. On the contrary, as $\alpha$ approaches $1$, the violation decays to zero, implying the convergence of the projected ensembles to the designs in the large $N$ limit. This may be understood as the transition of the projected ensemble from a localized distribution to a uniform distribution over the Hilbert space. 

\begin{figure*}
\includegraphics[scale=0.5]{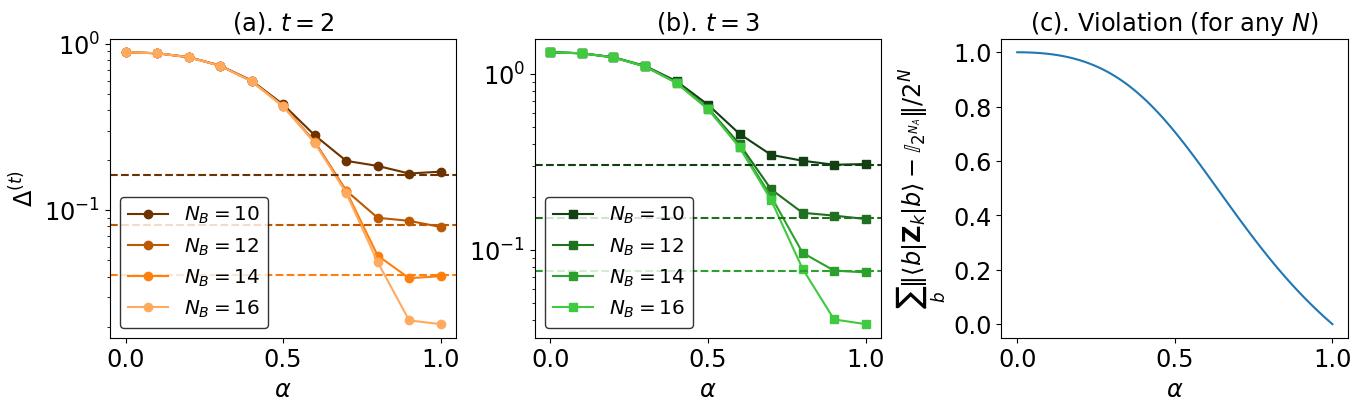}
\caption{\label{fig:z2trans} The figure illustrates the transition in the randomness of the projected ensemble when the initial generator states are generic states with $Z_2$ symmetry. Local $\sigma^x$ basis measurements are fixed for $N_B-1$ sites. The measurements on $N_B$-th site are performed in the eigenbasis of $\alpha\sigma^z+(1-\alpha)\sigma^x$. The trace distance between the moments of the Haar ensemble and the projected ensemble, $\overline{\Delta^{t}}$, is plotted against the parameter $\alpha$ for $t=2$ and $3$ for different system sizes. The dashed lines correspond to $\overline{\Delta^{t}}$ of that of Fig. \ref{fig:ZZ2plustrans}a. Note that the case of $t=1$ is trivial and stays nearly a constant for any $\alpha$, as it is independent of the measurement basis considered. } 
\end{figure*}

\section{Further details on deep thermalization in Ising chain}\label{comparision}
\subsection{Contrasting deep thermalization with and without translation symmetry}\label{symbrok_comparision}
{In the main text, we have examined the deep thermalization in the Ising chain having homogeneous model parameters and with periodic boundary conditions (PBCs). Recall that PBCs, together with homogeneous interaction and fields, imply translation symmetry in the considered system. Due to this symmetry, the entanglement builds up in the time-evolved state at a rate double that of the case with open boundary. As we notice here, the same is reflected in the phenomenon of emergent state designs. 
In this appendix, we study deep thermalization in the absence of translation symmetry and contrast it with the translation-symmetric case. }

\subsubsection{Through modification of boundary conditions}
{Here, we break the translation symmetry of the model by considering (i) open boundary condition (OBC) where interaction between the last and first spins of the chain is absent and (ii) inhomogeneous interaction between a pair of spins with closed boundary. Specifically, we contrast the initial decay of the trace distance measure for these two cases with the translation symmetric case. The corresponding results are shown in Fig. \ref{fig:ising}.  
The OBC implies $J_{1,N} = 0$, where $J_{1, N}$ represents the interaction strength between the first and $N$-th spins. In this case, \(\Delta^{(t)}(\tau)\) displays a power-law decay \(\sim \tau^{-1.2}\) in the early time regime. This has nearly half the exponent of $\sim \tau^{-2.2}$ scaling observed in the translation-symmetric case. In order to interpolate between these two cases, we consider the system with a closed boundary but with an inhomogeneous interaction between a pair of spins. This is implemented by considering $J_{1, N}=0.5$. Notice that though the system is now closed, the translation symmetry still remains absent, and we refer to this case as moderately broken translation symmetry. Here, we observe that the power-law decay $\sim \tau^{-1.8}$ faster than in the OBC case, however, it is still slower than in the PBCs case with homogeneous model parameters. In all the cases, the trace distance measure saturates at late times to appropriate RMT predicted values, which are marked by the horizontal dashed lines. }

\begin{figure}[ht!]
\includegraphics[scale=0.5]{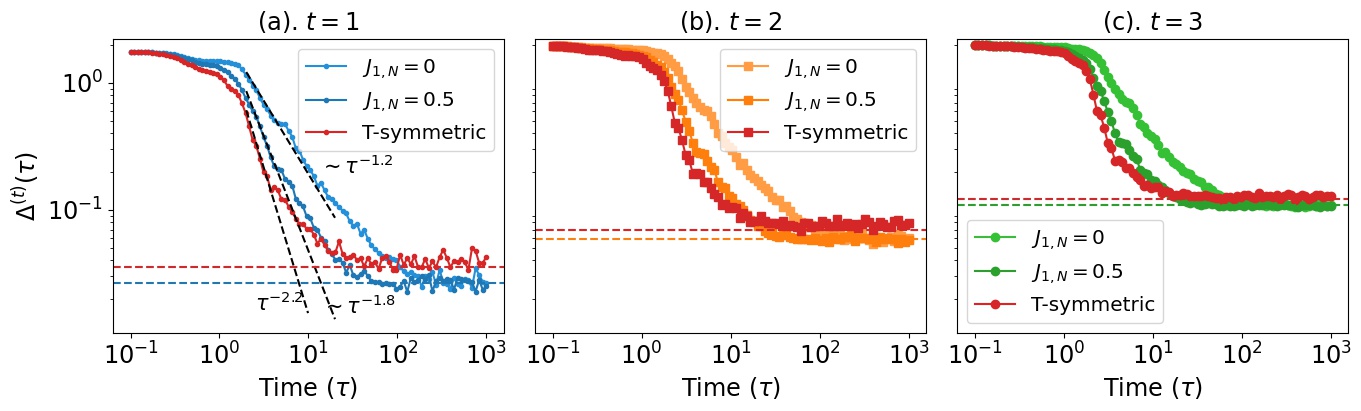}
\caption{\label{fig:ising} The figure contrasts the decay of $\Delta^{(t)}$ for the state $|0\rangle^{\otimes N}$ evolved under the Hamiltonian of the Ising chain with translation symmetry (red curves) with that of moderately broken translation symmetry and the one with open ends (blue, orange, and green curves for $t=1$, $2$, and $3$). The latter cases are characterized by the interaction strength between the first and $N$-th spins, $J_{1, N} = 0.5$ and $0$, respectively. The data is presented for $N=16$ and $N_A=3$. The panels correspond to the first three moments $t=1$, $2$, and $3$, respectively. We show the power-law scaling to compare the convergence rate in the early time regime. At late times, the data saturate to the predicted RMT values with and without symmetry, which are marked by the horizontal dashed lines.} 
\end{figure}

\subsubsection{Through introduction of disorder}
\label{comparision_disord}
{In this section, we consider the Ising chain with inhomogeneous model parameters. We employ this by introducing diagonal and off-diagonal disorders. In particular, we consider two cases by randomizing the strengths of the (i) interactions and (ii) transverse fields. Thus, the Hamiltonian of the Ising model with these disorders can be written as  
\begin{eqnarray}\label{inhomo_ising}
H=\sum_{i=1}^{N} (J+\eta_{i})\sigma^{x}_{i} \sigma^{x}_{i+1}+\sum_{i=1}^{N}h_x\sigma^{x}_{i}+\sum_{i=1}^{N}(h_y+\xi_i)\sigma^{y}_{i}, 
\end{eqnarray}
where $\eta_{i}$ and $\xi_i$ are independent and identically distributed random variables chosen from Gaussian distribution $\mathcal{N}(0, v)$ with zero mean and variance $v$.  Similar to the main text, here we study the model with a closed boundary. The case with $\eta_{i} = \xi_i = 0$ for all $i=1, 2, \cdots, N$, and $J = 1$ corresponds to the Hamiltonian considered in Eq. \eqref{ising} of the main text. Note that the model needs to be chaotic in order to obtain the deep-thermalization characteristics. It is known that the Ising model with both transverse and longitudinal fields is nonintegrable for non-zero values of $h_x$, $h_y$, and $J$ \cite{kim2023nonintIsing, kim2014testing, sharma2015quenches, cotler2023emergent}. But, in order to stay close to the point where the model is robustly chaotic, we choose $\left[J, h_{x}, h_y\right] = \left[1, (\sqrt{5}+1)/4, (\sqrt{5}+5)/8\right]$, as also considered in the main text. Then, the random variables $\eta_{i}$ and $\xi_i$ make the model inhomogeneous, thereby breaking the translation symmetry. The variance $v$ of the random variables controls the strength of the disorder introduced in the otherwise translation symmetric system. In other words, $v$ quantifies the degree of translation symmetry breaking in the considered model.}

{In Fig. \ref{fig:ising-disorder}, we illustrate the decay of $\Delta^{(t)}(\tau)$ for the disordered Ising Hamiltonian presented in Eq. (\ref{inhomo_ising}), and contrast it with the clean, homogeneous case. Figures \ref{fig:ising-disorder}a-\ref{fig:ising-disorder}c show the evolution when the disorder is introduced by randomizing the interaction strengths. We consider a few values of $v$ to demonstrate the effects of disorder with increasing strength. However, we keep $v$ considerably small such that the model remains chaotic, which can also be inferred from the decreasing trend and long-time saturation of $\Delta^{(t)}(\tau)$. We notice that in the early time regime, the trace distance measure shows slower convergence in comparison to the clean case, and this rate decreases as the strength of the disorder is enhanced. In particular, the numerical results depict that the disorder ensemble averaged $\Delta^{(t)}(\tau)$ has the power-law scaling as $\sim\tau^{-1.4},$ $\sim\tau^{-1.1},$ and $\sim\tau^{-0.8}$ for $v = 0.3$, $0.4$ and $0.5$, respectively. This can be contrasted with the clean case, where the decay follows $\sim \tau^{-2.2}$ law. For all the cases, $\Delta^{(t)}(\tau)$ saturate at late time to appropriate RMT predicted values, which are marked by the horizontal dashed lines. Similarly, Figs. \ref{fig:ising-disorder}d-\ref{fig:ising-disorder}f illustrate the evolution when the disorder is present only in the transverse fields. As previous, we observe that in the early time-regime the decay of trace distance measure gets slower with increasing disorder strength. Consequently, the above analysis unveils that the Ising model with translation symmetry exhibits faster convergence of the trace distance measure during the early time evolution in comparison to the cases when the symmetry is broken.}

\begin{figure}[ht!]
\includegraphics[scale=0.5]{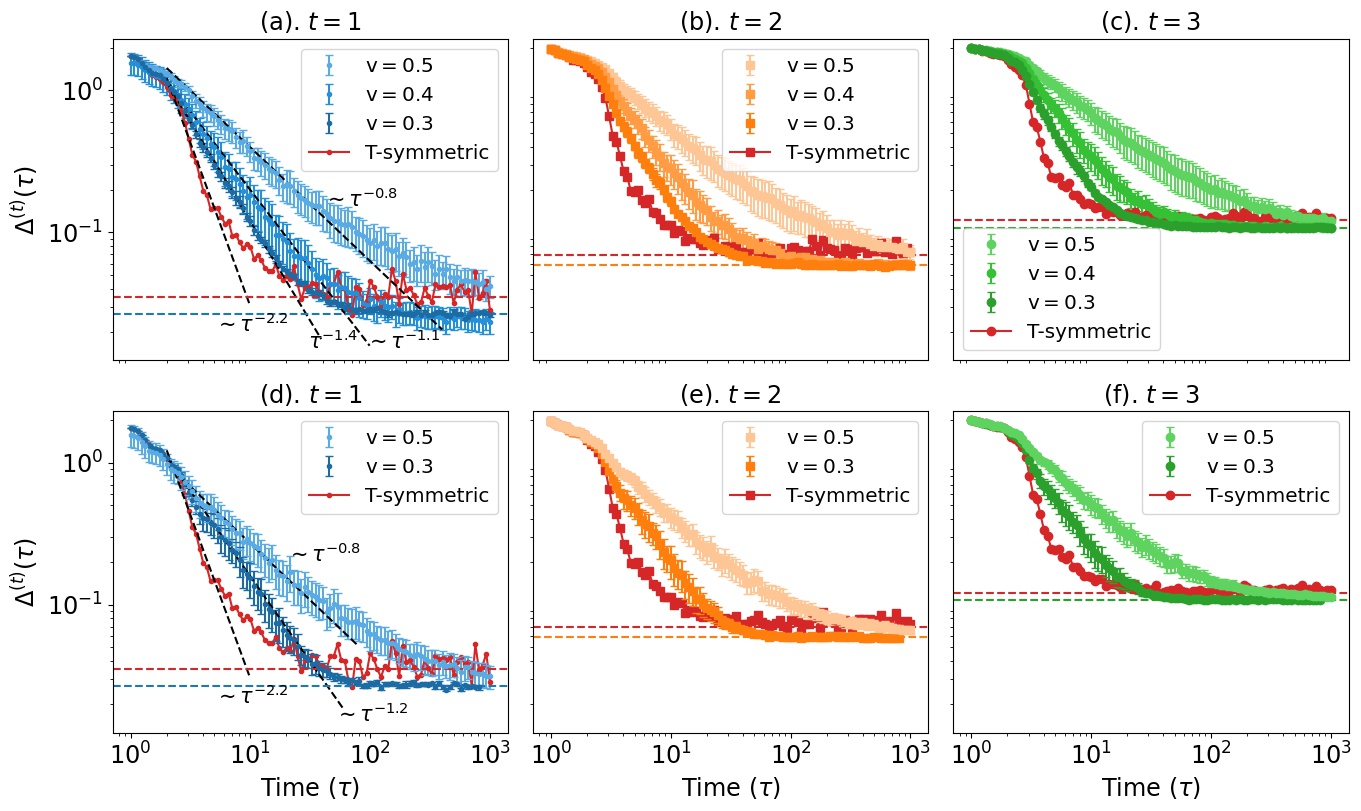}
\caption{\label{fig:ising-disorder} 
The figure contrasts the decay of $\Delta^{(t)}$ for the state $|0\rangle^{\otimes N}$ evolved under the dynamics of translation-symmetric Ising model (red curves) and disordered Ising model given in Eq. \eqref{inhomo_ising} (blue, orange, and green curves for $t=1$, $2$, and $3$).  Like in the main text, we employ PBCs, and the data is presented for $N = 16$ and $N_A = 3$. (a)-(c) and (d)-(f) illustrate the results when the disorder is introduced by randomizing the strengths of the interactions and transverse fields, respectively. The variance $v$ of the randomized parameters controls the strength of the disorder (see Appendix \ref{comparision_disord}), and lighter shading corresponds to a stronger disorder. 
For the disordered cases, ensemble-averaged (over $10$ realizations) data with standard ensemble error (shown as error bars) is presented. We show the power-law scaling to compare the convergence rate in the early time regime. 
At late times, the data saturate to the predicted RMT values with and without symmetry, which are marked by the horizontal dashed lines.} 
\end{figure}

\subsection{Benchmarking the late time evolution with translation plus reflection symmetric random states}\label{benchmark_latetim}
\begin{figure}[ht!]
    \centering
    \includegraphics[width=0.4\linewidth]{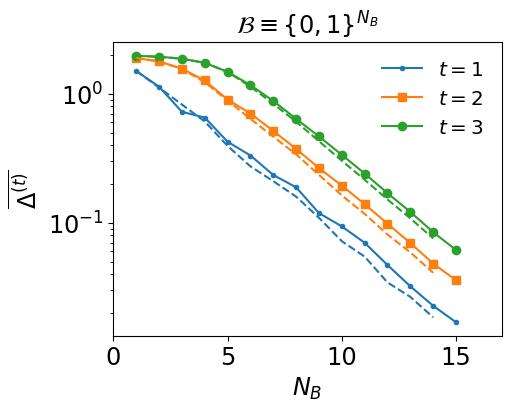}
    \caption{Comparison of the average trace distance $\overline{\Delta^{t}}$ versus $N_B$ for the first three moments when the initial generator states are simultaneous eigenvectors of the translation and reflection symmetries (shown with thick lines) with the case of Haar random generator states (shown with dashed lines). The measurements are performed in $\sigma^z$ basis, and $N_A$ is fixed at $3$. Additionally, note that we have considered $10$ samples of the initial generator states to numerically evaluate the average trace distance. }
    \label{ref-tran}
\end{figure}
{The Ising Hamiltonian considered in this work displays reflection symmetries about every site in addition to the translation symmetry. Moreover, the initial state $|0\rangle^{\otimes N}$ is a common eigenvector of all the aforementioned symmetry operators. Hence, to obtain the RMT predictions, it is imperative to evaluate the trace distance $\overline{\Delta^{t}}$ for the ensemble of states that are common eigenvectors of the translation and reflection symmetries with the eigenvalue $1$. It is to be noted that the symmetry operators corresponding to distinct reflection operations do not commute with each other, nor do they commute with the translation operator. Nevertheless, they all share an overlapping eigenspace. Therefore, under the dynamics of the chaotic Ising chain, the initial state evolves and equilibrates to one of the states belonging to the ensemble spanned by the common eigenspace of all the symmetry operators. Starting from a global Haar random state $|\psi\rangle$, one can construct uniform random states with the aforementioned symmetries as follows:
\begin{eqnarray}\label{tran_ref}
   |\phi\rangle=\dfrac{1}{\mathcal{N}}\mathbf{R}^{N-1}_{0}\mathbf{R}^{N-2}_{0}\cdots \mathbf{R}^{0}_{0}\mathbf{T}_{0}|\psi\rangle,  
\end{eqnarray}
where $\mathbf{R}^{j}_{0}$ denotes the projector onto the reflection (around $j$-th site) symmetric subspace with the eigenvalue $(-1)^{0}=1$. Its worth noting that if a state $|\psi\rangle$ is a simultaneous eigenvector of a reflection operator about an arbitrary site and also the translation operator, then it will also be an eigenvector of other reflection operators corresponding to remaining $N-1$ sites. Having constructed the ensemble of states as given in Eq. (\ref{tran_ref}), one can proceed with the computation of $\overline{\Delta^{t}}$. The corresponding results for the first three moments are shown In Fig. \ref{ref-tran}. These results are compared with the case when the initial states are completely Haar random (shown with dashed lines). We notice slight differences between both cases. These differences may be attributed to the non-commuting nature of the translational and reflection symmetries. The RMT values obtained in this figure, by taking all the symmetries into account, can be used to compare the saturation values in Fig. \ref{fig:ising-phy} of the main text.}

\vspace{1cm}
\twocolumngrid 
\bibliographystyle{myunsrtnat}
\bibliography{references}

\end{document}